\renewcommand{\Bmatrix}[1]{\begin{bmatrix}#1\end{bmatrix}}
\newcommand{\Pmatrix}[1]{\begin{array}{ll}#1\end{array}}
\DeclarePairedDelimiter\abs{\lvert}{\rvert}
\DeclarePairedDelimiter\Abs{\lvert}{\rvert^2}
\newcommand{\E}[1][]{\ifthenelse{\isempty{#1}}{\mathbb{E}}{\mathbb{E}\left(#1\right)}}
\newcommand{\conv}{\mathrm{conv}}
\newcommand{\defeq}{\triangleq}
\newcommand{\eqdef}{\triangleq}
\newtheorem{proposition}{Proposition}
\newtheorem{remark}{Remark}
\newtheorem{definition}{Definition}
\newtheorem{theorem}{Theorem}
\newtheorem{example}{Example}
\newtheorem{lemma}{Lemma}
\DeclareMathAlphabet{\mathcal}{OMS}{cmsy}{m}{n}
\newcommand{\nn}{\nonumber}
\long\def\comment#1{}
\newfont{\bbb}{msbm10 scaled 700}
\newfont{\bb}{msbm10 scaled 1100}
\newcommand{\RR}{\mbox{\bb R}}
\newcommand{\dv}{{\bf d}}
\newcommand{\rv}{{\bf r}}
\newcommand{\xv}{{\bf x}}
\newcommand{\yv}{{\bf y}}
\newcommand{\Am}{{\bf A}}
\newcommand{\Ac}{{\cal A}}
\newcommand{\Cc}{{\cal C}}
\newcommand{\Dc}{{\cal D}}
\newcommand{\Ec}{{\cal E}}
\newcommand{\Gc}{{\cal G}}
\newcommand{\Kc}{{\cal K}}
\newcommand{\Mc}{{\cal M}}
\newcommand{\Nc}{{\cal N}}
\newcommand{\Pc}{{\cal P}}
\newcommand{\Qc}{{\cal Q}}
\newcommand{\Rc}{{\cal R}}
\newcommand{\Sc}{{\cal S}}
\newcommand{\Tc}{{\cal T}}
\newcommand{\Uc}{{\cal U}}
\newcommand{\Vc}{{\cal V}}
\renewcommand{\arg}{{\hbox{arg}}}
\begin{document}
\title{Optimality of Treating Interference as Noise: \\A Combinatorial Perspective}
\author{\authorblockN{Xinping Yi and Giuseppe Caire}\\ 
\thanks{This work was presented in part at IEEE ISIT 2015, Hong Kong, China \cite{Yi:TIN}, and Asilomar Conference on Signals, Systems and Computers \cite{Yi:ITLinQ}.}
\thanks{X. Yi and G. Caire are with Communication and Information Theory Chair in Department of Electrical Engineering and Computer Science at Technische Universit{\"a}t Berlin, 10587 Berlin, Germany. (email: {\tt \{xinping.yi, caire\}@tu-berlin.de})}
}

\maketitle

\begin{abstract}
For single-antenna Gaussian interference channels, we re-formulate the problem of determining the Generalized Degrees of Freedom (GDoF) region achievable by treating interference as Gaussian noise (TIN) derived in \cite{TIN} from a combinatorial optimization perspective. We show that the TIN power control problem can be cast into an assignment problem, such that the globally optimal power allocation variables can be obtained by well-known polynomial time algorithms (e.g., centralized Hungarian method or distributed Auction algorithm). Furthermore, the expression of the TIN-Achievable GDoF region (TINA region) can be substantially simplified with the aid of maximum weighted matchings. We also provide conditions under which the TINA region is a convex polytope that relax those in \cite{TIN}. For these new conditions, together with a channel connectivity (i.e., interference topology) condition, we show TIN optimality for a new class of interference networks that is not included, nor includes, the class found in \cite{TIN}.

Building on the above insights, we consider the problem of joint link scheduling and power control in wireless networks, which has been widely studied as a basic physical layer mechanism for device-to-device (D2D) communications. Inspired by the relaxed TIN channel strength condition as well as the assignment-based power allocation, we propose a low-complexity GDoF-based distributed link scheduling and power control mechanism (ITLinQ+) that improves upon the ITLinQ scheme proposed in \cite{ITLINQ} and further improves over the heuristic approach known as FlashLinQ. It is demonstrated by simulation that ITLinQ+ provides significant average network throughput gains over both ITLinQ and FlashLinQ, and yet still maintains the same level of implementation complexity. More notably, the energy efficiency of the newly proposed ITLinQ+ is substantially larger than that of ITLinQ and FlashLinQ, which is desirable for D2D networks formed by battery-powered devices.
\end{abstract}

\begin{IEEEkeywords}
Gaussian Interference Channels, Treating Interference as Noise, Generalized Degrees of Freedom, Power Control, Device-to-Device communications.
\end{IEEEkeywords}

\newpage
\section{Introduction}  \label{sec:intro}

Power control and treating interference as Gaussian noise (TIN) is one of the most well-known, vastly employed, and yet most attractive
interference management techniques,  due to its low complexity, robustness to channel uncertainty, and to the fact that codes for the single-user 
Gaussian channel are well understood and efficiently implemented. 
Interestingly, it has also been shown that in some cases TIN is optimal or approximately optimal. 
For example, we know that TIN achieves the sum-capacity in the noisy regime of the two-user Gaussian interference channel  \cite{noisyIC1,noisyIC2,noisyIC3}.
In the general $K$-user single-antenna Gaussian interference channel,  Geng {\em et. al} \cite{TIN} have shown that, subject to a certain set of conditions on 
the channel strengths, TIN achieves the optimal {\em Generalized Degrees of Freedom} (GDoF) region,  and achieves the capacity region to within a constant gap,  independent of the channel coefficients and the signal-to-noise ratio (SNR). 
The TIN optimality condition found in \cite{TIN} is simply expressed in words as the fact that,  for each user (i.e., intended transmitter-receiver pair)
the desired signal strength level is no less than the sum of maximum strengths of all interfering signals from the transmitter 
to the other (unintended) receivers, and to the receiver from the other (unintended) transmitters, when all signal strengths are expressed in 
log-scale (e.g., in dB). For future reference, we indicate this condition as the ``GNAJ" condition, from the initials of the authors 
of \cite{TIN}. Under the GNAJ condition, the TIN-Achievable GDoF region (briefly referred to as ``TINA region'') is a convex polytope defined 
by the individual GDoF constraints and by the sum GDoF inequalities corresponding to all possible ordered subsets of users. With the aid of a combinatorial tool named {\em potential graphs}, the $K$-user TINA region was characterized in \cite{TIN} by $\sum_{m=2}^K {K \choose m} (m-1)! \approx (K-1)!$ constraints. 
More recently, it has been also shown by Sun and Jafar in \cite{TIN-Para} that, by a series of transformations of linear programs, 
the sum-GDoF characterization can be translated into a minimum weighted matching problem in combinatorial 
optimization.  As such, the sum-GDoF under the GNAJ condition can be characterized as disjoint cycles partition of the 
interference network.

Such remarkable findings have inspired various related works, such as the TIN optimality of general X-channels \cite{TINX}, 
parallel interference networks \cite{TIN-Para}, and compound interference networks \cite{TIN-Com}. 
In general, the TIN problem consists of two subproblems.
Beyond the TINA region characterization, it is also important to find efficient methods to solve the TIN power control problem, that is, finding the (minimum) transmit powers that achieve a certain desired GDoF-tuple in the TINA region. The TIN power control problem has been open for a long time until a recent progress reported by Geng and Jafar in \cite{TIN-Com}, where a simple yet elegant 
polynomial-time centralized iterative algorithm to find the globally optimal power allocation variables is provided. This centralized algorithm relies still on the representation by potential graphs. 

One may wonder if the potential graph representation is the only path to both TINA region characterization and TIN power control problems. Further, due to the distributed nature of interference channels, decentralized power allocation algorithms are more interesting, desirable and yet challenging.
In addition, it is worth noting that the GNAJ condition was only proven to be sufficient. An interesting counter-example in \cite{TIN} showed that there exist partially-connected (in the sense of channel strength levels) interference channels, such that TIN achieves the optimal GDoF region and yet the GNAJ condition is not satisfied. A natural question then arises as to whether there exists a larger class of networks, including partially-connected ones, such that TIN is GDoF-optimal (i.e., TIN with power control still achieves the optimal GDoF region of the channel).
These questions motivate this work.

In this paper, the optimality of TINA is revisited. 
The TIN optimality problem was formulated in \cite{TIN} by first eliminating power allocation variables using the potential theorem \cite{combinopt}, to establish 
the TINA region in terms of GDoF variables only, and then by finding the optimal power allocation variables for a given GDoF-tuple in the TINA region \cite{TIN-Com}. 
In contrast, we re-formulate this problem in a reversed way, from a combinatorial optimization perspective \cite{combinopt}. 
Interestingly, by first casting power allocation into an assignment problem, the globally optimal power allocation variables corresponding to any feasible GDoF tuple 
in the TINA region can be found by solving the equivalent assignment problem in polynomial time, either in a centralized manner (e.g., Hungarian method \cite{Hungarian1955,munkres1957}) 
or in a distributed one (e.g., Auction algorithm \cite{DGS-Auction}). Inspired by the duality between the assignment and the maximum weighted matching problems 
in combinatorial optimization \cite{assignment_problem}, we can express the TINA characterization in terms of a maximum weighted matching problem. 
In doing so, the TINA region is significantly simplified, requiring only $2^K-1$ constraints instead of $\approx (K-1)!$. 
Interestingly, such a representation also offers an interpretation of the disjoint cycle partition in \cite{TIN-Para}. 
By this new formulation, we show that the TINA region is a convex polytope under a novel channel strength condition that relaxes the GNAJ condition in \cite{TIN}.
This new condition requires that the desired signal strength of each user is no less than the {\em maximum difference} between the sum strength 
of any pair of incoming/outgoing interference signals and the strength of the link between such a pair (all in dB scale). 
Furthermore, together with a connectivity condition, we are able to establish the optimality of TINA for a new class of networks. Such conditions are not included nor include the GNAJ condition \cite{TIN}.

Whereas fascinating from a conceptual point of view, how to translate these results into practice is also of great interest and practical importance to system designers. Device-to-Device (D2D) communication is expected to play an important role in future wireless communication systems
(e.g., 5G), including applications such as car-to-car, machine-to-machine, proximity-based services, 
and multi-hop infrastructureless mesh networks. The physical layer of D2D systems is usually modeled as a Gaussian interference channel. Under the practical constraint of treating interference as Gaussian noise for the sake of complexity and robustness,\footnote{From \cite{lapidoth1996nearest} we know that 
this condition is essentially equivalent to imposing the use of minimum distance decoding at each receiver.}
a long-standing problem consists of controlling the power of the D2D links (transmit-receive pairs) in order to maximize the overall
network throughput.\footnote{Consistently with \cite{TseHanly-MAC}, we use the term ``throughput''  to indicate the time-averaged rate over a long sequence of scheduling time slots. In contrast, the instantaneous rate is the rate achieved in a single slot, for a given set of active users, i.e., links  with positive transmit power.}
The usual approach of guaranteeing a target signal-to-interference plus noise ratio (SINR) to each link turns out
to yield an operating point that can be arbitrarily far from optimal. This is because some bottleneck links may impose
too stringent constraints to the overall network. In contrast, much better network throughput can be achieved
by selecting a subset of active links in each slot and allocating positive power only to these selected links \cite{FlashLinQ,BinPower,ITLINQ}. 
By scheduling the subsets of active links over time, it is possible to achieve individual throughputs such that
some {\em network utility function} is maximized. In turn, the shape of the network utility function determines the desired fairness 
criterion (e.g., see \cite{mo2000fair,neely-fnt}).
Link selection and scheduling has become the subject of intensive research. This problem is closely related to power control, 
since link selection corresponds to allocating zero or positive power to the transmitters. For a general D2D network, this problem is non-convex and, as a mater of fact, has a combinatorial nature. For example, a well-known power control method consists of 
replacing $\log(1 + {\rm SINR})$ with $\log({\rm SINR})$ in the user rate expression, and using Geometric Programming (GP) \cite{GP}. 
However, by neglecting the ``$1+$'' inside the ``$\log$'' one has implicitly forced all links to use positive power, since assigning zero power to some links
would drive the GP objective function to $-\infty$.  
Instead, it is known that generally much better solutions can be found by first selecting a ``good'' subset of 
active links, and then allocating (positive) power only to the selected links. 

Various schemes for link selection have been proposed in the literature, e.g., \cite{hajek1988link,borbash2006wireless,LinQ-TDMA,FlashLinQ,ITLINQ} to name a few. For example, a large number of works is based on constructing an interference conflict graph \cite{conflict}, and then selecting maximal independent sets. These ``maximal independent set scheduling'' schemes are flawed by a fundamentally arbitrary choice of the threshold according to which two links are considered to be in conflict. 
Recently, a distributed link scheduling mechanism called FlashLinQ was proposed in \cite{FlashLinQ}, using a more dynamic link selection policy. Compared to those ``maximal independent set scheduling'' schemes, FlashLinQ dynamically takes both signal and interference strength into account. 
In FlashLinQ, links are ranked in priority order and considered one by one. Whether or not a link will be scheduled depends on 
whether this link does not cause/receive too much interference to/from links of higher priority that have already been selected (i.e., declared active). 
More recently, inspired by the GNAJ condition in \cite{TIN}, the authors in \cite{ITLINQ} proposed a new distributed link scheduling mechanism (referred to as ``ITLinQ'') that provides significant sum throughput gains over FlashLinQ and yet maintains the same level of low-complexity. Instead of comparing the ratio of signal to interference strength of the new link with a fixed threshold as in FlashLinQ, ITLinQ compares the interference level caused to/received from 
existing links with {\em an appropriately chosen exponent} of the signal strength of the new link. It was verified by simulation in \cite{ITLINQ} that ITLinQ outperforms 
FlashLinQ with 28\%-110\% gains for a scenario where up to 4096 links can be scheduled.

As a matter of fact, for general channel strength coefficients, the maximal subset of links satisfying the GNAJ condition may not lead 
to the maximal (weighted) sum throughput or sum GDoF. As will be demonstrated later, our relaxed channel strength conditions 
provide a larger convex polytope TINA region. This provides a generally larger subset of links on which power control can be applied, resulting in
higher weighted sum GDoF. 
As a consequence, we are able to design a new distributed link scheduling and power control mechanism (named ``ITLinQ+''), 
further fine-tuning the decision criterion of link selection. 
It is demonstrated by simulation that, without power control, ITLinQ+ gains 5\%-20\% average sum throughput improvement over ITLinQ with 1024 links, at the expense of limited signaling 
overhead. When we also include power control, the average sum throughput is further enhanced, and more notably, 
the energy efficiency of ITLinQ+ is substantially improved (e.g., 50-100 times improvement for a 10-link D2D network). 
In short, ITLinQ+ improves the sum throughput performance and yet requires much less energy consumption, which is desirable for 
battery-powered D2D communications.  Notice that achieving better or equal throughput with less energy consumption is not a 
contradiction here, since the network is operated in an interference limited regime, such that rate is not immediately and obviously related 
to transmit power.

This paper is organized as follows. In the next section, we present the system model of the general $K$-user Gaussian interference channels, followed by a summary of the main existing results of the approximate optimality of treating interference as Gaussian noise. In Section \ref{sec:formulation}, we reformulate the TIN problem from a combinatorial optimization perspective and we obtain a simplified description of the TINA region. By the simplified TINA region, we are able to identify a relaxed channel strength condition under which the general TINA region is a convex polytope. In Section \ref{sec:linq-pc}, we consider the GDoF-based link scheduling and power control problem, offering a framework in this regard. Driven by this framework, the new decentralized link scheduling and power control mechanism named ITLinQ+ is proposed in Section \ref{sec:algorithms} with detailed implementations. Section \ref{sec:simulations} provides numerical results 
and comparisons with ITLinQ and FlashLinQ for some scenarios of D2D networks. 
We conclude the paper in Section \ref{sec:conclusion}.

\underline{\bf Notation}: 
Throughout this paper, we define $\Kc \defeq \{1,2,\dots,K\}$. Let $A$, $\Ac$, and $\Am$ represent a variable, a set, and a matrix, respectively. In addition, $\Ac^c$ is the complementary set of $\Ac$, and $\abs{\Ac}$ is the cardinality of the set $\Ac$. $\Am_{ij}$ presents the $ij$-th entry of the matrix $\Am$, and $\Am_i$ is the $i$-th row of $\Am$. $A_{\Sc} \defeq \{A_i, i \in \Sc\}$, and $\Ac_{\Sc} \defeq\cup_{i \in \Sc} \Ac_i$. Define $\Ac \backslash a \defeq \{x| x \in \Ac, x \neq a\}$ and $\Ac_1 \backslash \Ac_2 \defeq \{x | x \in \Ac_1, x \notin \Ac_2\}$. 
Logarithms are in base 2. With a bit abuse of notation, $k \ne i \ne j$ means $k \ne i$, $i \ne j$ and $k \ne j$.

\section{System Model}

\subsection{Channel Model}

We consider a $K$-user interference channel where both transmitters (Tx) and receivers (Rx) are equipped with a
single antenna each. We shall refer to the $j$-th Tx-Rx pair as the $j$-th {\em user}. 
At Rx-$j$ ($\forall j \in \Kc\defeq \{1,\dots,K\}$), the received signal at the discrete-time instant $t$ is given by 
\begin{align} \label{channel-tilde}
Y_j(t) = \sum_{i=1}^K h_{ij} \tilde{X}_i(t) + Z_j(t) 
\end{align}
where $\tilde{X}_i(t)$ is the transmitted signal from Tx-$i$ with power constraint $\E [\Abs{\tilde{X}_i(t)}] \le P_i$, $h_{ij}$ is the channel coefficient between Tx-$i$ and Rx-$j$,
$Z_j(t) \sim \Cc\Nc(0,1)$ is the (normalized) additive white Gaussian noise at Rx-$j$. 
Following \cite{TIN}, we translate the signal model in (\ref{channel-tilde}) into an equivalent GDoF-friendly form, given by
\begin{align}  \label{channel-friendly}
Y_j(t) = \sum_{i=1}^K \sqrt{P^{\alpha_{ij}}} e^{j \theta_{ij}} {X}_i(t) + Z_j(t) 
\end{align}
where $X_i(t)=\frac{\tilde{X}_i(t)}{\sqrt{P_i}}$ is the normalized transmitted signal with power constraint $\E [\Abs{{X}_i(t)}] \le 1$, $\sqrt{P^{\alpha_{ij}}}$ and $\theta_{ij}$ are magnitude and phase of the channel coefficient between Tx-$i$ and Rx-$j$, respectively, and the exponent $\alpha_{ij}$ is defined as the corresponding channel strength level
\begin{align}
\alpha_{ij} = \frac{\log(\max\{1,\Abs{h_{ij}} P_i\})}{\log P}
\end{align}
where $P > 1$.
Given a transmit power $P^{r_i}$ at Tx-$i$ with $r_i \le 0$, the Signal to Interference plus Noise Ratio (SINR) achieved by TIN at Rx-$j$ is given by 
$
\frac{P^{\alpha_{jj} + r_j}}{1+ \sum_{i: i \ne j} P^{\alpha_{ij}+r_i}}.
$
We assume that the transmitters know channel strength levels perfectly for power control, and the receivers have access to both the magnitude and phase of channel coefficients.

\subsection{Treating Interference as Noise}

We follow standard definitions for encoding/decoding functions and achievable rates. 
The individual achievable GDoF of message $W_k$ is defined as
$
d_k \defeq \lim_{P \to \infty} \frac{R_k}{\log P}
$
where $R_k$ is the achievable rate of user $k$.  The (optimal) GDoF region $\Pc^*$ is the collection of all achievable GDoF-tuples $(d_1,d_2,\dots,d_K)$.
The TIN-Achievable GDoF (TINA) region defined in \cite{TIN} is the set of all $K$-tuples $(d_1,d_2,\dots,d_K)$ with components satisfying
\begin{align} \label{eq:gdof}
d_j \leq \max\left \{ 0, \alpha_{jj} + r_j - \max \{0, \max_{i: i \ne j} (\alpha_{ij} + r_i)\} \right \}.
\end{align}
for some assignment of the power allocation variables $(r_1, r_2, \ldots, r_K) \in \RR^K_{-}$. 
In the following, we denote the TINA by $\Rc^{\rm TINA}$, where the dependence on the specific network defined by $\{\alpha_{ij}: i,j \in \Kc\}$ is clear from the context. 
From \cite{TIN} we also know that the {\em polyhedral} TINA region is obtained by 
removing the positive part operator\footnote{The positive part of $x$ is $\max\{0,x\}$.} from the right-hand side of (\ref{eq:gdof}). 
Using the potential theorem \cite{combinopt}, the authors of \cite{TIN} are able to find a convex polytope form for the polyhedral TINA region 
for any subnetwork formed by a subset $\Sc \subseteq \Kc$ and its associated desired and interfering links. 
We shall denote such polytope by $\Pc^{\rm TINA}_\Sc$. Since removing the positive part in the right-hand side of (\ref{eq:gdof}) restricts 
the GDoF region, then $\Pc^{\rm TINA}_\Sc$ is achievable by switching off all users in $\Sc^c = \Kc \backslash \Sc$ and by using
TIN for the users in $\Sc$.  We also denote by $\Rc^*$ the {\em optimal} GDoF region of the interference network, 
i.e., the region of GDoF-tuples achievable over any possible coding scheme (not restricted to TIN).

The main results in \cite{TIN} are summarized as below.
\begin{theorem} \label{jafar-thm}
$[$ GNAJ \cite{TIN} $]$
Consider a $K$-user single-antenna Gaussian interference channel with channel strengths $\{\alpha_{ij} : i,j \in \Kc\}$.  
\begin{enumerate}
\item For any subnetwork formed by users in $\Sc \subseteq \Kc$, 
$\Pc^{\rm TINA}_{\Sc}$ can be described by\footnote{We use the term {\em ordered subset} to indicate that order matters, but elements are not repeated. For example, $(1,2,3)$ and $(1,3,2)$ are two rising such subsets for $m = 3$, but $(1,2,2)$ is not valid, because it contains repeated elements.}
\begin{align} \label{TINA-Jafar}
& 0 \le d_k  \le \alpha_{kk}, \forall k \in \Sc, \;\;\;  d_{i} = 0, \forall i \in \Sc^c \nonumber \\ 
& \sum_{k = 0}^{m-1} d_{i_k} \le \sum_{k =0}^{m-1} (\alpha_{i_k i_k} - \alpha_{i_{[k-1]_{{\rm mod} m}} i_k}), \nonumber \\
& \forall \;\; \mbox{ordered subsets} \; (i_0,\dots,i_{m-1})  \in \Sc, 
  \forall \;\; m \in  \{2,\dots,\abs{\Sc}\}.
\end{align}
\item The TINA region of the whole network is given by 
\begin{equation} \label{union-tina}
\Rc^{\rm TINA} = \bigcup_{\Sc \subseteq \Kc} \Pc^{\rm TINA}_{\Sc}. 
\end{equation}
\item If $\forall k \in \Kc$,
\begin{align}  \label{GNAJ-cond}
\alpha_{kk} \ge \max_{i:i \ne k} \{\alpha_{ik}\} + \max_{j:j \ne k} \{\alpha_{kj}\}, 
\end{align}
then TIN is GDoF-optimal, i.e., $\Rc^*= \Rc^{\rm TINA} = \Pc^{\rm TINA}_\Kc$ (the whole region is a single convex polytope). 
\end{enumerate}
\end{theorem}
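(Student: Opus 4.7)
The plan is to handle the three parts in turn, with quite different tools: Part 1 is a polyhedral projection / network-flow statement; Part 2 is a direct ``switch-off'' decomposition; and Part 3 combines an achievability simplification with an information-theoretic converse.

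For Part 1, I would begin from the polyhedral relaxation of \eqref{eq:gdof}, obtained by dropping the outer $\max\{0,\cdot\}$. Introducing an auxiliary ``noise node'' with potential $r_0 \defeq 0$ and weights $\alpha_{0k} \defeq 0$, the resulting constraints $d_j + r_i - r_j \le \alpha_{jj} - \alpha_{ij}$ (for all pairs $i \ne j$ in $\Sc \cup \{0\}$) form a system of difference constraints in the potentials $\{r_k\}_{k \in \Sc}$. By the potential theorem (equivalently, Bellman--Ford / shortest-path duality), such a system admits a solution if and only if every directed cycle in the associated weighted digraph has nonnegative total weight. Writing the nonnegative-cycle condition for a simple cycle $(i_0, i_1, \ldots, i_{m-1})$ within $\Sc$ yields exactly the $m$-user inequality in \eqref{TINA-Jafar}, while cycles through the auxiliary node produce the individual-rate bounds $d_k \le \alpha_{kk}$. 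Projecting out (Fourier--Motzkin) the power variables $\{r_k\}$ then delivers the polytope $\Pc^{\rm TINA}_\Sc$.

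For Part 2, I would interpret the outer $\max\{0,\cdot\}$ in \eqref{eq:gdof} as the option of setting $r_i = -\infty$ for some subset $\Sc^c$ of users, which silences them (and forces $d_i = 0$). For any tuple in $\Rc^{\rm TINA}$, let $\Sc$ be the set of users with finite $r_i$; the remaining constraints are precisely the polyhedral TINA inequalities restricted to $\Sc$, placing the tuple in $\Pc^{\rm TINA}_\Sc$. The reverse inclusion is immediate, since each $\Pc^{\rm TINA}_\Sc$ is TIN-achievable by switching off all users in $\Sc^c$.

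For Part 3, there are two distinct things to establish. The achievability-side simplification $\Rc^{\rm TINA} = \Pc^{\rm TINA}_\Kc$ under the GNAJ condition \eqref{GNAJ-cond} follows by showing every $\Pc^{\rm TINA}_\Sc$ is contained in $\Pc^{\rm TINA}_\Kc$. Given $(d_\Sc, 0_{\Sc^c}) \in \Pc^{\rm TINA}_\Sc$, one must verify every cycle inequality in \eqref{TINA-Jafar} with $\Sc$ replaced by $\Kc$. For any cycle passing through some $k \in \Sc^c$ (where $d_k = 0$), the GNAJ bound $\alpha_{kk} \ge \alpha_{k_- k} + \alpha_{k k_+}$ (with $k_\pm$ the predecessor and successor of $k$ on the cycle) gives $\alpha_{k_- k} + \alpha_{k k_+} - \alpha_{kk} \le 0$, so ``short-cutting'' past $k$ only loosens the RHS; iterating reduces to a cycle entirely within $\Sc$, for which the inequality already holds by hypothesis. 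The harder direction is the converse $\Rc^* \subseteq \Pc^{\rm TINA}_\Kc$, which demands an information-theoretic outer bound that matches the same cycle inequalities. I expect this to be the main obstacle: it calls for the genie-aided argument of \cite{TIN}, where carefully chosen side information (typically noisy versions of selected interferers' input signals) is supplied to the receivers along a cycle $(i_0,\ldots,i_{m-1})$, and Fano's inequality together with the chain rule and a Markov property are used to extract precisely the bound $\sum_k (\alpha_{i_k i_k} - \alpha_{i_{[k-1]} i_k}) \log P$, up to terms $o(\log P)$. Combining outer and inner bounds yields $\Rc^* = \Rc^{\rm TINA} = \Pc^{\rm TINA}_\Kc$.
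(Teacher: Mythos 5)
Note that the paper does not prove this theorem at all: it is quoted as Theorem~\ref{jafar-thm} directly from \cite{TIN}, so there is no in-paper proof to compare against. Your blind reconstruction matches the strategy of that cited work exactly as the paper itself summarizes it---eliminating the power variables via the potential theorem so that nonnegative-cycle conditions yield \eqref{TINA-Jafar}, obtaining \eqref{union-tina} by switching off the users in $\Sc^c$, short-cutting cycles past zero-GDoF users under \eqref{GNAJ-cond} to show the region collapses to $\Pc^{\rm TINA}_\Kc$, and invoking the cyclic genie-aided outer bounds for the converse (the same bounds from \cite{cyclicGIC} and \cite[Theorem 3]{TIN} that this paper reuses in Appendix~\ref{proof:tin})---so your outline is correct, with the converse appropriately deferred to the cited argument rather than re-derived.
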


\begin{remark}
It is easy to see that, for $\Sc=\Kc$, there are in total $\sum_{m=2}^K {K \choose m} (m-1)! \approx (K-1)!$ constraints in \eqref{TINA-Jafar}. 
Since the sum-GDoF $\sum_{k=0}^{m-1} d_{i_k}$ does not depend on the order of the indices, 
for each unordered set of indices $\{i_0, \ldots, i_{m-1}\}$ there are $(m-1)!$ inequalities, of which only one is relevant. However, 
finding which one is relevant involves, in general, extensive search, such that finding a general more compact form
that eliminates redundant inequalities is non-trivial. 
\end{remark}

\section{TIN Problem Reformulation from a Combinatorial Perspective}
\label{sec:formulation}
The expression of the TINA region in \eqref{TINA-Jafar} involves a huge number of constraints, some of which are redundant. However, it is unclear which one is necessary and which one is required, such that we do not really know how to analyze this region.
To make progress in this regard, we re-formulate the TIN problem of \cite{TIN,TIN-Com} from a combinatorial optimization perspective.
By casting the power allocation into an assignment problem, we find an alternative form for the TINA region via its dual -- the maximum weighted matching 
problem \cite{assignment_problem}. Some basic definitions of weighted matching are recalled in Appendix \ref{sec:pre}.

\subsection{Casting Power Allocation into Assignment Problems}
In what follows, we consider a feasible GDoF tuple in $\Pc^{\rm TINA}_{\Sc}$ for any user set $\Sc \subseteq \Kc$, where \footnote{Note that we consider $d_i >0, \ \forall~i \in \Sc$. If $d_i=0$, user pair $i$ will be not activated and we simply remove it from $\Sc$ without affecting others.}
\begin{align} \label{eq:ind-gdof}
d_j = \alpha_{jj} + r_j - \max \{0, \max_{i: i \ne j} (\alpha_{ij} + r_i)\}, \, j\in \Sc
\end{align}
given power allocation parameters $\{r_j, j \in \Sc\}$.
By introducing two sets of auxiliary variables, namely, left labels $\{y_{u_j}\}$ and right labels $\{y_{v_j}\}$
\begin{align}
y_{u_j} &= - r_j\\
y_{v_j} &= \max \{0, \max_{i:i \ne j} (\alpha_{ij} + r_i)\},
\end{align}
the individual achievable GDoF can be rewritten as
\begin{align} \label{scummily}
d_j = \alpha_{jj} - (y_{u_j}+y_{v_j}).
\end{align}
Thus, for $\Sc \subseteq \Kc$, the feasibility of a GDoF-tuple can be guaranteed by the minimization 
of the auxiliary variables sum:
\begin{subequations}
\begin{align}
\min_{\{y_{u_j}, y_{v_j}\}} \quad &\sum_{j \in \Sc} ( y_{u_j}+y_{v_j})\\
{\rm s.t.}~\quad & y_{u_j}+y_{v_j} \ge \alpha_{jj} - d_j, \forall j\in \Sc.
\end{align}
\end{subequations}

In general, a given GDoF-tuple in $\Pc^{\rm TINA}_{\Sc}$ may be achieved by different assignments of the power control variables 
$\{r_j : j \in \Kc\}$. The componentwise minimum configuration corresponding to a given 
target GDoF tuple is referred to as the {\em globally optimal power control assignment}. 
In this case, no users can reduce its transmit power while still achieving the same GDoF-tuple.
Using the fact that, for all $i \ne j$,
\begin{align}
y_{u_i} + y_{v_j} &= - r_i + \max \{0, \max_{i':i' \ne j} (\alpha_{i'j} + r_{i'})\}\\
&\ge - r_i +  \max_{i':i' \ne j} (\alpha_{i'j} + r_{i'})\\
& \ge \alpha_{ij},
\end{align}
we have the following theorem that solves the GDoF-based power control problem for a given feasible GDoF-tuple.
\begin{theorem} \label{theorem:power}
For any $(d_j : j \in \Kc) \in \Pc^{\rm TINA}_{\Sc}$, a feasible power allocation assignment $(r_j, j \in \Sc)$ can be found by 
solving the following linear program:
\begin{subequations} \label{eq:assign}
\begin{align} 
(AP): \quad \min_{\{y_{u_j}, y_{v_j}\}} &\sum_{j \in \Sc} (y_{u_j} + y_{v_j})  \label{eq:assign1}\\
{\rm s.t.}~& y_{u_i} + y_{v_j}  \ge \alpha_{ij}, \forall i \ne j  \label{eq:assign2}\\
               &y_{u_j} + y_{v_j}  \ge \alpha_{jj} - d_j, \forall j \in \Sc \label{eq:assign3} \\
               & y_{u_j} \ge 0, \; y_{v_j} \ge 0, \ \forall j \in \Sc  \label{eq:assign4}
\end{align}
\end{subequations}
where $r_j = -y_{u_j}$, $\forall j \in \Sc$. This linear program can be recognized as a dual formulation of an assignment problem \cite{assignment_problem}, so that the unique globally-optimal power allocation can be found in polynomial time (e.g., $O(K^3)$) using e.g., the (centralized) Hungarian method \cite{Hungarian1955,munkres1957} or the (distributed) Auction algorithm \cite{DGS-Auction}.
\end{theorem}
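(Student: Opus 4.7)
The plan is to split the argument into two pieces: a combinatorial translation showing that feasibility of (AP) is equivalent to the existence of a power allocation $(r_j)$ that achieves the target GDoF tuple via TIN, followed by recognition of (AP) as the textbook LP-dual of a bipartite maximum-weight assignment problem. The bridge between the two viewpoints is the change of variables already suggested in the discussion preceding the theorem, namely $y_{u_j} := -r_j$ and $y_{v_j} := \max\{0,\max_{i \ne j}(\alpha_{ij}+r_i)\}$.

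For the forward direction, take any nonpositive vector $(r_j : j \in \Sc)$ that realizes $(d_j) \in \Pc^{\rm TINA}_{\Sc}$ through \eqref{eq:ind-gdof}. Setting $y_{u_j}, y_{v_j}$ as above, the nonnegativity \eqref{eq:assign4} is immediate from $r_j \le 0$ and the outer $\max\{0,\cdot\}$; constraint \eqref{eq:assign2} holds by construction of $y_{v_j}$ (and indeed was already derived in (13)--(15) of the excerpt); and \eqref{eq:assign3} is just \eqref{eq:ind-gdof} rewritten. For the reverse direction, take any feasible $(y_{u_j},y_{v_j})$ and set $r_j := -y_{u_j} \le 0$. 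Combining \eqref{eq:assign2} with $y_{v_j} \ge 0$ gives $y_{v_j} \ge \max\{0, \max_{i \ne j}(\alpha_{ij}+r_i)\}$, so \eqref{eq:assign3} yields $\alpha_{jj} + r_j - \max\{0, \max_{i\ne j}(\alpha_{ij}+r_i)\} \ge d_j$, i.e., the TIN GDoF tuple $(d_j)$ is achieved. Feasibility of (AP) for every $(d_j) \in \Pc^{\rm TINA}_{\Sc}$ is then automatic.

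Next I would identify (AP) as the LP-dual of the maximum-weight matching on the complete bipartite graph with vertex sets $\{u_j\}_{j \in \Sc}$ and $\{v_j\}_{j \in \Sc}$ and edge weights
\[ w_{ij} = \alpha_{ij} \text{ for } i \ne j, \quad w_{jj} = \alpha_{jj} - d_j,\]
the sign constraints $y_u, y_v \ge 0$ being absorbed by allowing unmatched vertices (equivalently, appending dummy zero-weight options). By the total unimodularity of the bipartite assignment polytope, the LP relaxation is integral and strong duality applies, so the optimum of (AP) equals the max-weight matching value. The Hungarian method \cite{Hungarian1955,munkres1957} solves this in $O(\abs{\Sc}^3)$ centralized time and returns both the matching and the optimal labels $(y_u,y_v)$; the Auction algorithm of \cite{DGS-Auction} produces the same pair in a fully distributed, asynchronous fashion, yielding the required $r_j = -y_{u_j}$.

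The main obstacle lies in the stronger \emph{global optimality} claim, namely that no $r_j$ can be made more negative without violating the achievability of some $d_i$. I would address this by invoking the well-known lattice structure of the set of optimal dual solutions of the assignment LP (the Shapley--Shubik competitive-equilibrium prices): both the Hungarian and Auction algorithms can be arranged to terminate at the unique minimum-price equilibrium, which is componentwise maximal in the $y_{u_j}$ and, via $r_j = -y_{u_j}$, componentwise minimal in the transmit powers, thus confirming uniqueness and global optimality.
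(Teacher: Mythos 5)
Your forward direction and your treatment of the assignment-duality/global-optimality part (integrality of the bipartite matching polytope, termination of Kuhn--Munkres/DGS-Auction at the extreme label equilibrium so that $r_j=-y_{u_j}$ is componentwise minimal) are in line with what the paper does in its preamble to the theorem and in the remark following it. The genuine problem is your reverse direction. You claim that for \emph{any} feasible point of (AP), setting $r_j=-y_{u_j}$ achieves the tuple, "since \eqref{eq:assign3} yields $\alpha_{jj}+r_j-\max\{0,\max_{i\ne j}(\alpha_{ij}+r_i)\}\ge d_j$." But \eqref{eq:assign3} reads $y_{u_j}+y_{v_j}\ge \alpha_{jj}-d_j$, i.e.\ $d_j\ge \alpha_{jj}+r_j-y_{v_j}$, and combining this with $y_{v_j}\ge \max\{0,\max_{i\ne j}(\alpha_{ij}+r_i)\}$ only bounds $d_j$ from \emph{below} by something that is itself below the achieved GDoF; it gives no upper bound on $d_j$. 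The claim is in fact false for general feasible points: with a single user, $\alpha_{11}=1$, $d_1=1$, the point $y_{u_1}=5$, $y_{v_1}=0$ is feasible for (AP) but $r_1=-5$ achieves GDoF $-4\ne 1$. So inflating $y_{u}$ (i.e.\ shrinking power) preserves feasibility of (AP) while destroying achievability of the target tuple.

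The repair is to argue at the \emph{optimum} of (AP), which is all the theorem needs ("found by solving the LP"): by your forward direction there is a feasible point with objective value $\sum_{j\in\Sc}(\alpha_{jj}-d_j)$, while summing \eqref{eq:assign3} shows every feasible point has objective at least this value; hence the optimal value equals $\sum_{j\in\Sc}(\alpha_{jj}-d_j)$ and each constraint \eqref{eq:assign3} is tight at any optimal solution. With equality in \eqref{eq:assign3}, the chain you wrote does go through: $d_j=\alpha_{jj}-y_{u_j}-y_{v_j}\le \alpha_{jj}+r_j-\max\{0,\max_{i\ne j}(\alpha_{ij}+r_i)\}$, so the optimal labels produce an achieving power allocation. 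This tightness of \eqref{eq:assign3} at optimality is exactly the observation the paper makes (via \eqref{scummily} and the remark that the optimum is attained with equality in \eqref{eq:assign3}), and it is also what you need before invoking the equilibrium-label argument for global minimality, since only optimal dual solutions are guaranteed to correspond to achieving power vectors.
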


\begin{remark}
The general solutions of (\ref{eq:assign}) may not lead to the globally optimal power allocation. 
{Due to the relation $r_j = -y_{u_j}$ and $y_{u_j} \ge 0, \; y_{v_j} \ge 0$ for all $j$, we conclude that the problem of finding the globally minimal power allocation is equivalent to that of finding the maximum left label equilibrium (cf., $ \max \sum_{j} y_{u_j}$) or the minimum right label equilibrium (cf., $\min \sum_{j} y_{v_j}$) to minimize the sum of the overall labels (i.e., $\min \sum_{j} (y_{u_j} + y_{v_j})$) in \eqref{eq:assign}.
The Hungarian method has various implementations, but most of them are dedicated 
merely to the minimization of the {\em overall} sum of both left and right labels. 
As said, {\em the maximal left label equilibrium} to minimize the overall sum of labels is more relevant in our context. Fortunately, the Kuhn-Munkres algorithm \cite{Hungarian1955,munkres1957}, a variant of the Hungarian method, offers such an equilibrium in solving the assignment problem. A detailed implementation of the Kuhn-Munkres algorithm with some parameters specified to fit our problem is relegated to Appendix \ref{sec:kuhn} (see Algorithm \ref{alg:Hung}).
It is also worth noting that, according to the equality in \eqref{scummily}, the optimal solution to \eqref{eq:assign} is achieved when the equality of \eqref{eq:assign3} holds. This observation is also added to Algorithm \ref{alg:Hung} as the new termination criterion. 
In particular, for an assignment problem with size $K$, the Kuhn-Munkres algorithm requires at most $K$ rounds of iteration to converge 
to the optimal assignment solution. This fact can be also used to check the feasibility of GDoF tuples: if 
Algorithm \ref{alg:Hung} does not converge to the optimal solution for a given GDoF tuple within $K$ iterations, 
then this GDoF tuple is infeasible.

It is also worthwhile to mention that a distributed Auction algorithm, originally due to Demange, Gale, and Sotomayor \cite{DGS-Auction}, 
achieves the minimum right label equilibrium, whose values are element-wise smaller than any other feasible ones, 
leading to the global optimality of power allocation in a decentralized manner. A detailed implementation is presented in Section \ref{sec:auction} (see Algorithm \ref{alg:Auction}).}
\end{remark}

\subsection{TINA Region Representation}
In the following, starting from the power allocation solution of Theorem \ref{theorem:power} and exploiting the duality between 
assignment and maximum weighted matching problems, 
we shall re-formulate the TINA region in a more useful and compact form. 
First, given a GDoF tuple $(d_1,\dots,d_K)$ and channel strength level values $\{\alpha_{ij}, i,j \in \Kc\}$,
we define the following matrix associated with the assignment problem (\ref{eq:assign}):
\begin{align} \label{eq:inputmatrix}
\Am_{ij} = \left\{ \Pmatrix{\alpha_{ij}, & i \ne j \\ \alpha_{jj} -d_j, & i = j} \right. .
\end{align}

By the duality theory in linear programming, we observe that the dual problem of \eqref{eq:assign} is given by
\begin{subequations} \label{eq:mwm}
\begin{align} 
\max  & \sum_{(i,j) \in \Ec} \Am_{ij} x(i,j),    \\
{\rm s.t.} & \sum_{i \in \Uc :  (i,j) \in \Ec} x(i,j) \le 1,  \\
               & \sum_{j \in \Vc :  (i,j) \in \Ec} x(i,j) \le 1,  \\
               & x(i,j) \in [0,1] \label{eq:integral}.  
\end{align}
\end{subequations}
It is known that this linear program has integer-valued 
optimal solutions for bipartite graphs \cite{matching}. 
Since in our case the graph associated to the transmitters and receivers in $\Sc$ and corresponding intended 
and interfering links is bipartite by construction, 
then \eqref{eq:mwm} coincides with a maximum weighted matching problem, 
obtained by replacing \eqref{eq:integral} with $x(i,j) \in \{0,1\}$.

Next, due to the complementary slackness condition \cite{assignment_problem}, an edge $(i,j)$ belongs to the maximum-weight matching, i.e., $x(i,j)=1$, if and only if 
$y_{u_i}+y_{v_j} = \Am_{ij}$.  
For a given $\Sc \subseteq \Kc$, due to (\ref{scummily}) we have that a feasible GDoF-tuple implies 
equality in \eqref{eq:assign3}. Hence, it follows that the set of feasible GDoF-tuples (i.e, the region $\Pc^{\rm TINA}_\Sc$) coincides with the 
set of all $(d_j : j\in \Sc)$ for which the maximum matching solution of (\ref{eq:mwm}) is 
$\{(j,j), j\in \Sc\}$. By the strong duality theorem, the minimum of sum of all left and right labels $\{y_{u_j}, y_{v_j}\}$ in the primal problem \eqref{eq:assign} is equal to the maximum sum weights of all matchings in the dual problem \eqref{eq:mwm}.
This is the key observation that enables us to provide a more compact form for the TINA region. 

We construct a weighted full-connected bipartite graph $\Gc = (\Kc, \Kc, \Kc \times \Kc)$, where the weight $\alpha'_{ij}$ is 
specified as
\begin{align}  \label{alpha-prime}
\alpha'_{ij} = \left\{ \Pmatrix{\alpha_{ij}, & i \neq j \\ 0, & i=j }\right. .
\end{align}
For any $\Sc \subseteq \Kc$, we define the subgraph $\Gc[\Sc] = (\Sc, \Sc, \Sc \times \Sc)$ with weights 
$\{\alpha'_{ij} : i,j \in \Sc\}$. By the observation above, the sum of $\{\Am_{jj} : j\in \Sc\}$ must be 
no less than $w(\Mc)$ for any matching $\Mc$ of $\Gc[\Sc]$. Hence, we can write
\begin{align}
\sum_{j \in \Sc} (\alpha_{jj} - d_j) \ge \max w(\Mc_{\Sc}) = w(\Mc^*_{\Sc}), 
\end{align}
where $\Mc^*_{\Sc}$ is the matching of $\Gc[\Sc]$ with the maximum weight. This yields 
the following result.
\begin{theorem}
\label{theorem:tina-represent}
Consider a $K$-user single-antenna Gaussian interference channel with channel strengths $\{\alpha_{ij} : i,j \in \Kc\}$.  
For any user subset $\Sc \subseteq \Kc$, $\Pc^{\rm TINA}_{\Sc}$ is given by:
\begin{align} \label{ziofa}
\Pc^{\rm TINA}_{\Sc} = \left\{(d_k:k\in \Kc): \Pmatrix{d_k \ge 0, \ \forall k \in \Sc, \quad d_i=0, \forall i \in \Sc^c\\
\sum_{k \in \Sc'} d_k \le \sum_{k \in \Sc'} \alpha_{kk} - w(\Mc^*_{\Sc'}), \ \forall \Sc' \subseteq \Sc} \right\}
\end{align}
where $w(\Mc^*_{\Sc'})=0$ if $\abs{\Sc'}=1$. This simplified representation is equivalent to the expression in \eqref{TINA-Jafar}.
\end{theorem}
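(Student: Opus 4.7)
The plan is to derive the representation \eqref{ziofa} by exploiting the LP duality between the assignment problem (AP) and the maximum-weighted matching problem (MWM) already set up in Theorem \ref{theorem:power} and \eqref{eq:mwm}, and then to verify equivalence with the cycle-based description \eqref{TINA-Jafar} via the cycle decomposition of permutations.

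First, fix $\Sc\subseteq\Kc$ and a candidate tuple $(d_k)$ with $d_k\geq 0$ on $\Sc$ and $d_i=0$ on $\Sc^c$. By \eqref{eq:ind-gdof} and the derivation around \eqref{scummily}, the tuple lies in $\Pc^{\rm TINA}_\Sc$ iff the AP \eqref{eq:assign} admits a feasible $(y_u,y_v)$ attaining equality in \eqref{eq:assign3}, i.e. $y_{u_j}+y_{v_j}=\alpha_{jj}-d_j$ for every $j\in\Sc$. Such a solution has AP objective $\sum_{j\in\Sc}(\alpha_{jj}-d_j)$, while \eqref{eq:assign3} alone forces every AP-feasible objective to be at least $\sum_{j\in\Sc}(\alpha_{jj}-d_j)$; hence feasibility of $(d_k)$ is equivalent to the AP optimum being exactly $\sum_{j\in\Sc}(\alpha_{jj}-d_j)$. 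By strong LP duality and the bipartite-integrality of \eqref{eq:mwm}, this is in turn equivalent to the diagonal matching $\{(j,j):j\in\Sc\}$ being a maximum-weight matching of the bipartite graph on $\Sc\times\Sc$ with edge weights $\Am_{ij}$ from \eqref{eq:inputmatrix}; that is, $w_\Am(\Mc)\leq\sum_{j\in\Sc}(\alpha_{jj}-d_j)$ for every matching $\Mc$ of $\Gc[\Sc]$.

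Next I would translate this into the subset inequalities of \eqref{ziofa}. Any matching $\Mc$ of $\Gc[\Sc]$ decomposes uniquely into its diagonal part $\{(j,j):j\in D\}$ on some $D\subseteq\Sc$ and a matching of $\Gc[\Sc']$ (with $\Sc'=\Sc\setminus D$) consisting of purely off-diagonal edges. Since $\Am_{jj}=\alpha_{jj}-d_j$ and $\Am_{ij}=\alpha'_{ij}$ for $i\neq j$, the $\Am$-weight of $\Mc$ equals $\sum_{j\in D}(\alpha_{jj}-d_j)$ plus the $\alpha'$-weight of its non-diagonal part. Cancelling the $D$-indexed terms reduces the max-weight inequality to $w_{\alpha'}(\Mc')\leq\sum_{j\in\Sc'}(\alpha_{jj}-d_j)$ for every matching $\Mc'$ of $\Gc[\Sc']$. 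Maximising the left-hand side (noting diagonal edges contribute $\alpha'_{jj}=0$) gives $w(\Mc^*_{\Sc'})\leq\sum_{j\in\Sc'}(\alpha_{jj}-d_j)$, and sweeping $\Sc'$ over all subsets of $\Sc$ produces exactly the family of constraints in \eqref{ziofa}; conversely, $\Mc^*_{\Sc'}$ augmented by diagonal edges on $\Sc\setminus\Sc'$ realises an honest matching of $\Gc[\Sc]$ attaining the bound, so no constraint is missed. This proves the first assertion.

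For equivalence with \eqref{TINA-Jafar} I would use cycle decomposition of permutations. Each ordered subset $(i_0,\ldots,i_{m-1})\subseteq\Sc$ encodes a cyclic permutation yielding a perfect matching of $\Gc[\{i_0,\ldots,i_{m-1}\}]$ of $\alpha'$-weight $\sum_k\alpha_{i_{[k-1]_m}i_k}\leq w(\Mc^*_{\{i_0,\ldots,i_{m-1}\}})$, so \eqref{ziofa} on this subset implies the GNAJ cycle inequality. Conversely, $\Mc^*_{\Sc'}$ is itself a permutation of $\Sc'$ whose cycle decomposition partitions $\Sc'$ into vertex sets $V_1,\ldots,V_r$: length-$\geq 2$ cycles $C_\ell$ yield GNAJ inequalities $\sum_{k\in V_\ell}d_k\leq\sum_{k\in V_\ell}\alpha_{kk}-w(C_\ell)$, while singletons $V_\ell=\{j\}$ contribute the individual bound $d_j\leq\alpha_{jj}$ already present in \eqref{TINA-Jafar}. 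Summing over $\ell$ and using $w(\Mc^*_{\Sc'})=\sum_\ell w(C_\ell)$ recovers the $\Sc'$-inequality of \eqref{ziofa}. The main obstacle I anticipate is the bookkeeping in the second step, namely establishing cleanly the correspondence between matchings of $\Gc[\Sc]$ with $\Am$-weights and pairs $(\Sc',\Mc')$ with $\Mc'$ a matching of $\Gc[\Sc']$ under $\alpha'$-weights, together with the attendant weight accounting; once this is in place, the cycle step is routine modulo the convention $w(\Mc^*_{\Sc'})=0$ for singletons, which fuses the individual bounds into the cycle list.
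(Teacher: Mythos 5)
Your proposal is correct, and it is worth noting where it deviates from the paper's own argument. The paper's appendix proof establishes only the equivalence of \eqref{ziofa} with \eqref{TINA-Jafar}, inheriting the identification with $\Pc^{\rm TINA}_{\Sc}$ from Theorem \ref{jafar-thm}; it shows one inclusion by turning the maximum weighted matching plus the direct links into a disjoint-cycle cover (so each matching inequality is a sum of cycle inequalities), and the other by dominating every cycle inequality, and every combination of such inequalities, by the maximum-matching bound, invoking a fractional-perfect-matching argument for the combinations. Your Part 2 is the same equivalence but streamlined: the forward direction via cyclic-shift matchings and the converse via the cycle decomposition of $\Mc^*_{\Sc'}$ (cycles give GNAJ inequalities, fixed points give individual bounds), which makes the fractional-matching step unnecessary since each single inequality of one system is implied by the other system. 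Your Part 1 is an addition the paper only sketches in the main text: you make the AP/MWM duality rigorous and, crucially, extract \emph{all} subset constraints $\Sc'\subseteq\Sc$ from the condition that the diagonal be a maximum-weight matching, via the clean decomposition of an arbitrary matching into its diagonal part on $\Sc\setminus\Sc'$ and an off-diagonal matching of $\Gc[\Sc']$ (the paper's text only exhibits the $\Sc'=\Sc$ inequality before stating the theorem). Two small points to tighten: in the claimed ``iff'' with equality in \eqref{eq:assign3}, a non-Pareto point of $\Pc^{\rm TINA}_{\Sc}$ yields labels with $y_{u_j}+y_{v_j}=\alpha_{jj}-D_j(\rv)\le\alpha_{jj}-d_j$, so you must absorb the slack by raising $y_{v_j}$ (which preserves \eqref{eq:assign2} and \eqref{eq:assign4}) to actually attain equality; and in Part 2 you should note that, since $\Gc[\Sc']$ is complete bipartite with nonnegative weights, $\Mc^*_{\Sc'}$ may be taken perfect, so viewing it as a permutation of $\Sc'$ is legitimate.
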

\begin{proof}
See Appendix \ref{proof:tina-represent}.
\end{proof}

\begin{remark}
For individual users, i.e., $\abs{\Sc'}=1$, we have individual GDoF constraints, i.e., $d_k \le \alpha_{kk}$.
Using Theorem \ref{theorem:tina-represent} into (\ref{union-tina}), 
we find that we need only $2^K-1$ non-trivial inequalities, one for each non-trivial subset of $\Kc$, to describe 
the $K$-user TINA region $\Pc_\Kc^{\rm TINA}$, which is significantly less than $\approx (K-1)!$ in \cite{TIN}. 
\end{remark}

\begin{remark}
\cite[Theorem 3]{TIN-Para} states that if the GNAJ condition (\ref{GNAJ-cond}) is satisfied, the sum-GDoF is equal to the best cyclic partition bound. 
Explicitly, the best cyclic partition of a user subset $\Sc$ is a partition $\Sc=\{\Sc_1,\dots,\Sc_p\}$
\footnote{Recall that $\{\Sc_1,\dots,\Sc_p\}$ is a partition of $\Sc$ if $\Sc_i \cap \Sc_j = \emptyset$ $\forall i \ne j$ and $\Sc = \bigcup_{i=1}^p \Sc_i$.}
satisfying
\begin{align} \label{eq:bestcycle}
w(\Mc^*_{\Sc}) = \sum_{i=1}^p w(\Mc^*_{\Sc_i}).
\end{align}
\end{remark}

\begin{figure}[ht]
 \centering
\includegraphics[width=0.4\columnwidth]{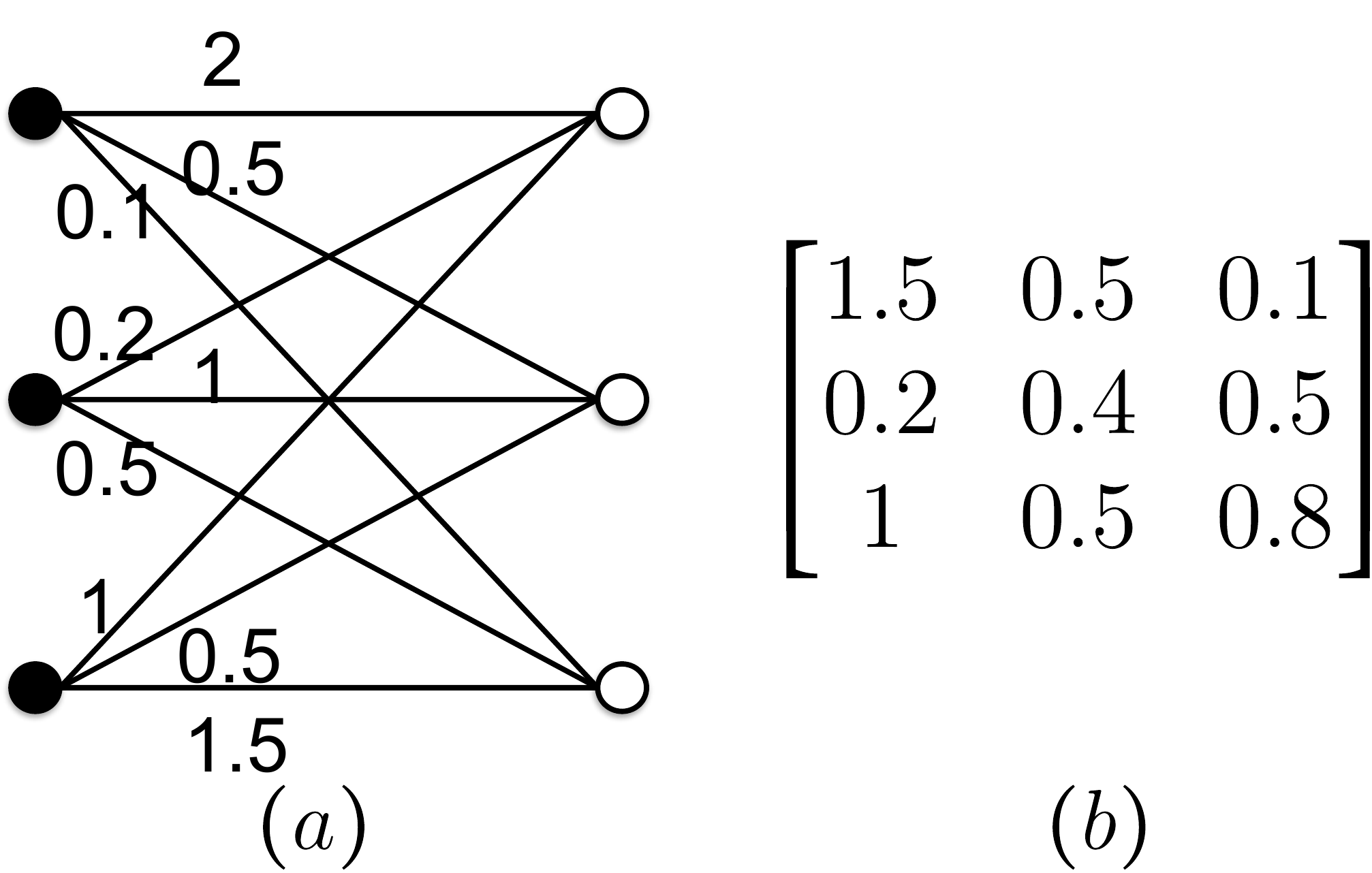}
\caption{(a) A 3-user interference channel, and (b) the input weight matrix of Hungarian method for the GDoF tuple (0.5, 0.6, 0.7).
}
\label{fig:power_ex}
\end{figure}

\begin{example}
\label{ex:power}
\normalfont
We consider the example in \cite[Fig.~8]{TIN-Com} to show the efficiency of our formulation, as shown in Fig.~\ref{fig:power_ex}(a). According to Theorem~\ref{theorem:tina-represent}, the TINA GDoF region is immediately given as
\begin{align*}
\Pc^{\rm TINA}_{\{1,2,3\}} = \{ (d_1,d_2,d_3) :~ &0 \le d_1 \le 2, 0 \le d_2 \le 1, 0 \le d_3 \le 1.5,  \\ &d_1+d_2 \le 2.3, d_2+d_3 \le 1.5 \\ &d_1+d_3 \le 2.4, d_1+d_2+d_3 \le 2.5  \}, 
\end{align*}
which is identical to the expression found in \cite{TIN-Com}. 
In order to solve the power allocation for a given GDoF-tuple (say $(0.5,0.6,0.7)$ in this case), 
we take the weight matrix in Fig.~\ref{fig:power_ex}(b) as the {input of the Kuhn-Munkres algorithm (see Algorithm \ref{alg:Hung} in Appendix \ref{sec:kuhn})} and  we obtain:
\begin{align*}
y_{u_1} = 1.2, \, y_{u_2} = 0.4, \,  y_{u_3} = 0.7, \,  y_{v_1} = 0.3,  \, y_{v_2} = 0, \,  y_{v_3} = 0.1 .
\end{align*}
Thus, the globally optimal power allocation assignment is $r_1=-1.2, r_2=-0.4, r_3=-0.7$, which coincide with what 
found in \cite{TIN-Com}. {The details are relegated to Appendix \ref{sec:kuhn}.}

Clearly, to start Algorithm \ref{alg:Hung}, $y_{u_j}$ and $y_{v_j}$ are initialized respectively with the maximum value of the $j$-th row of $\Am$ and 0. Following the procedure in Algorithm \ref{alg:Hung}, we gradually decrease $y_{u_j}$ and increase $y_{v_j}$ to make sure the constraints in \eqref{eq:assign} satisfied. 
Note that $r_j=-y_{u_j}$ is increasing during this procedure.  
Once we find one solution, it will be the global optimum assignment, because it is impossible to decrease $r_j$ (correspondingly increase $y_{u_j}$) 
and find another solution in the region that we have already explored. 
\hfill $\lozenge$
\end{example}

\subsection{A New TIN Optimality Condition}

Besides the reduction of the number of inequalities, this new formulation enables us to identify a relaxed channel strength condition such that the TINA region is a convex polytope.
\begin{theorem} \label{lemma:sim_region}
Consider a $K$-user single-antenna Gaussian interference channel with channel strengths $\{\alpha_{ij} : i,j \in \Kc\}$.  If
\begin{align} \label{C1}
\alpha_{kk} \ge \max_{i,j:~i,j \neq k} \{\alpha_{ik} + \alpha_{kj} - \alpha'_{ij}\}, \  \forall~k \in \Kc,
\end{align}
where $\alpha'_{ij}$ is defined in \eqref{alpha-prime}, then $\Pc^{\rm TINA}_\Sc$ is monotonically non-decreasing with respect to $\Sc$, i.e., 
if  $\Sc_1 \subseteq \Sc_2 \subseteq \Kc$ then $\Pc^{\rm TINA}_{\Sc_1} \subseteq \Pc^{\rm TINA}_{\Sc_2}$.
Also,  $\Rc^{\rm TINA} = \Pc^{\rm TINA}_{\Kc}$ is a convex polytope. 
\end{theorem}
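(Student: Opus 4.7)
The plan is to leverage the compact representation of $\Pc^{\rm TINA}_\Sc$ in Theorem \ref{theorem:tina-represent} and recast the monotonicity statement as a single-vertex incremental inequality for the maximum-weight matching. Take any $(d_k)\in\Pc^{\rm TINA}_{\Sc_1}$ with $\Sc_1\subseteq\Sc_2$; because $d_k=0$ on $\Sc_2\setminus\Sc_1$, the individual constraints are automatic, and for every $\Sc'\subseteq\Sc_2$, writing $\Sc''=\Sc'\cap\Sc_1$, the sum $\sum_{k\in\Sc'}d_k=\sum_{k\in\Sc''}d_k$ is already bounded by $\sum_{k\in\Sc''}\alpha_{kk}-w(\Mc^*_{\Sc''})$ via membership in $\Pc^{\rm TINA}_{\Sc_1}$. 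Hence monotonicity reduces to showing, for every $\Sc''\subseteq\Sc'\subseteq\Kc$,
\begin{align*}
w(\Mc^*_{\Sc'})-w(\Mc^*_{\Sc''})\;\le\;\sum_{k\in\Sc'\setminus\Sc''}\alpha_{kk},
\end{align*}
and by induction on $|\Sc'\setminus\Sc''|$ this further reduces to the single-vertex claim
\begin{align*}
(\star)\qquad w(\Mc^*_{\Sc\cup\{k\}})-w(\Mc^*_{\Sc})\;\le\;\alpha_{kk},\qquad k\notin\Sc.
\end{align*}

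Proving $(\star)$ is the main technical step, and it is where the new condition \eqref{C1} is used. I would examine the edges of $\Mc^*_{\Sc\cup\{k\}}$ incident to the two newly added vertices $u_k,v_k$ and split into cases. If neither $u_k$ nor $v_k$ is saturated, the matching lies entirely in $\Gc[\Sc]$ and the left-hand side of $(\star)$ is non-positive. If only $u_k$ is matched, say to $v_j$, deleting that edge yields a matching of $\Gc[\Sc]$ of weight $w(\Mc^*_{\Sc\cup\{k\}})-\alpha_{kj}$, so the gap is at most $\alpha_{kj}$; the $i=j$ instance of \eqref{C1} gives $\alpha_{kj}\le\alpha_{kk}-\alpha_{jk}\le\alpha_{kk}$ since $\alpha_{jk}\ge 0$. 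The case in which only $v_k$ is matched is symmetric. Finally, if both $u_k\sim v_j$ and $v_k\sim u_i$ with $i,j\in\Sc$, remove both of these edges and insert the single replacement edge $(u_i,v_j)$; this is legal because $u_i$ and $v_j$ are unsaturated in the restriction, and the resulting matching of $\Gc[\Sc]$ has weight $w(\Mc^*_{\Sc\cup\{k\}})-\alpha_{ik}-\alpha_{kj}+\alpha'_{ij}$, so
\begin{align*}
w(\Mc^*_{\Sc\cup\{k\}})-w(\Mc^*_{\Sc})\;\le\;\alpha_{ik}+\alpha_{kj}-\alpha'_{ij}\;\le\;\alpha_{kk},
\end{align*}
where the last inequality is precisely \eqref{C1}. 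The degenerate subcase $i=j$ is handled by the same formula using $\alpha'_{ii}=0$.

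With monotonicity in hand, the second conclusion is essentially immediate. By \eqref{union-tina} we have $\Rc^{\rm TINA}=\bigcup_{\Sc\subseteq\Kc}\Pc^{\rm TINA}_\Sc$, and monotonicity collapses this union to its maximal element $\Pc^{\rm TINA}_\Kc$. Since $\Pc^{\rm TINA}_\Kc$ is cut out by finitely many linear inequalities (the non-negativity bounds together with one sum inequality per non-empty $\Sc'\subseteq\Kc$ from Theorem \ref{theorem:tina-represent}), it is a convex polytope.

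The main obstacle I foresee is the combinatorial argument for $(\star)$: verifying that the ``swap'' edge $(u_i,v_j)$ really does yield a valid matching of $\Gc[\Sc]$ after the two incident-to-$k$ edges are removed, and that the $i=j$ degeneracy integrates seamlessly with the convention $\alpha'_{ii}=0$. Everything else — the reduction from the set-version to the single-vertex inequality and the final polytope conclusion — is routine manipulation once Theorem \ref{theorem:tina-represent} is invoked.
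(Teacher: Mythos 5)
Your proposal is correct and follows essentially the same route as the paper: the paper's proof also reduces monotonicity to a single-user increment and proves, via the same edge-swap construction (remove the two edges of $\Mc^*$ incident to user $k$, insert $(i,j)$ with weight $\alpha'_{ij}$), the key bound $w(\Mc^*_{\Sc}) - w(\Mc^*_{\Sc\setminus\{k\}}) \le \max_{i,j\ne k}\{\alpha_{ik}+\alpha_{kj}-\alpha'_{ij}\} \le \alpha_{kk}$ under \eqref{C1}, then compares the sum-GDoF constraints with and without user $k$ exactly as you do. Your explicit case analysis (unmatched or singly matched $u_k,v_k$) is in fact slightly more careful than the paper's ``without loss of generality'' treatment; only the degenerate subcase where $(u_k,v_k)$ itself is matched is left implicit, and it is trivial since $\alpha'_{kk}=0$.
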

\begin{proof}
See Appendix \ref{proof:sim_region}.
\end{proof}

\begin{remark}
The newly found channel strength condition is a relaxed version of the GNAJ condition (\ref{GNAJ-cond}), 
because $\alpha'_{ij}$ is non-negative such that if (\ref{GNAJ-cond}) is satisfied, then (\ref{C1})  is satisfied automatically.
When $i=j \ne k$, \eqref{C1} reduces to $\alpha_{kk} \ge \alpha_{ik} + \alpha_{ki}$, $\forall~k,i, {\rm ~s.t.~} k \ne i$. When $i \ne j \ne k$, it reduces to $\alpha_{kk} + \alpha_{ij} \ge \alpha_{ik} + \alpha_{kj}$, for all $k \ne i$, $k \ne j$ and $i \ne j$. As such, in a network where the condition \eqref{C1} is satisfied for all users, for a tuple $(i,j,k)$ with $i \ne j \ne k$, we have either $\alpha_{kk} \ge \alpha_{ik}+\alpha_{kj}$ or $\alpha_{ik}+\alpha_{kj} > \alpha_{kk} \ge \alpha_{ik}+\alpha_{kj} - \alpha_{ij}$. We conclude that, for a user $k$, if $\alpha_{ik}+\alpha_{kj} > \alpha_{kk} \ge \alpha_{ik}+\alpha_{kj} - \alpha_{ij}$, then $\alpha_{kk} \ge \alpha_{ki}+\alpha_{jk}$. It is because, if both $\alpha_{ik}+\alpha_{kj} > \alpha_{kk}$ and $\alpha_{ki}+\alpha_{jk} > \alpha_{kk}$ are satisfied, it leads to $2\alpha_{kk} < \alpha_{ik} + \alpha_{ki}+\alpha_{kj}+\alpha_{jk}$, which conflicts with the fact that $\alpha_{kk} \ge \alpha_{ki}+\alpha_{ik}$ for all $i \ne k$.
\end{remark}

In view of the fact that (\ref{C1}) is a relaxation of (\ref{GNAJ-cond}), the corresponding TINA region 
(although a convex polytope) is not generally optimal. This is because when (\ref{C1}) holds but (\ref{GNAJ-cond}) does not, 
the converse argument to prove GDoF-optimality does not go through.  However, we can exhibit a class of networks 
different from the class identified in \cite{TIN}, for which the TINA region is GDoF-optimal. 
This is a special class of partially connected interference channels satisfying a topological condition given below. 
Interestingly, this class of networks is not included nor includes the class defined by the GNAJ condition. 
The converse proof follows the approach in \cite{TIN} and is presented in Appendix.

\begin{theorem}
\label{theorem:tin}
Consider a $K$-user single-antenna Gaussian interference channel with channel strengths $\{\alpha_{ij} : i,j \in \Kc\}$.  
Assume that (\ref{C1}) holds and, in addition, that for every $\Sc \subseteq \Kc$ with $|\Sc| > 2$, and the corresponding 
fully connected weighted subgraph $\Gc[\Sc] = (\Sc, \Sc, \Sc \times \Sc)$ with weights $\{\alpha'_{ij} : i,j \in \Sc\}$, 
\begin{align} \label{C2}
\exists~ (i,j) \in \Mc^*_{\Sc}, \; \mbox{s.t.} \;  \alpha_{ij}= 0.
\end{align}
Then, $\Rc^* = \Rc^{\rm TINA} = \Pc^{\rm TINA}_\Kc$. 
\end{theorem}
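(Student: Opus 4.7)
The plan is to split the claim into achievability and converse, and to use the cycle-level genie-aided converse of \cite{TIN} together with condition (C2) to handle long cycles. Since (C1) holds, Theorem \ref{lemma:sim_region} already gives $\Rc^{\rm TINA} = \Pc^{\rm TINA}_\Kc$ as a single convex polytope, and this polytope is TIN-achievable with the power allocation of Theorem \ref{theorem:power}; hence $\Pc^{\rm TINA}_\Kc \subseteq \Rc^*$. What remains is the outer bound $\Rc^* \subseteq \Pc^{\rm TINA}_\Kc$, which by Theorem \ref{theorem:tina-represent} reduces to proving
\begin{align*}
\sum_{k \in \Sc'} d_k \le \sum_{k \in \Sc'} \alpha_{kk} - w(\Mc^*_{\Sc'}), \quad \forall \Sc' \subseteq \Kc,
\end{align*}
for every achievable GDoF tuple, the individual bounds $d_k \le \alpha_{kk}$ being the standard single-user capacities.

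For each $\Sc'$ my plan is to mirror the cyclic converse strategy of \cite{TIN,TIN-Para}. After completing $\Mc^*_{\Sc'}$ to a perfect matching by appending weight-zero self-loops at unmatched vertices, I would regard it as a permutation of $\Sc'$ and take its disjoint cycle decomposition: the cycles partition $\Sc'$ and their total weight is exactly $w(\Mc^*_{\Sc'})$. It would then suffice to apply the cycle-level genie-aided bound $\sum_{k \in C} d_k \le \sum_{k \in C} \alpha_{kk} - w(C)$ on every cycle $C$ of the decomposition and sum. Under GNAJ this cycle-level bound passes through cleanly for every cycle, but under the strictly weaker (C1) the genie construction can fail on cycles of length $\ge 3$ --- precisely the converse gap flagged in the paragraph preceding the theorem.

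Condition (C2) is designed to repair this gap. For any $\Sc'$ with $|\Sc'| > 2$ it supplies an edge $(i,j) \in \Mc^*_{\Sc'}$ with $\alpha_{ij} = 0$, i.e., a missing interference link sitting inside the optimal matching. I would use this in an induction on $|\Sc'|$. The base $|\Sc'| \le 2$ reduces to the single-user capacity bound or the noisy two-user converse (the latter being valid because (C1) specialized to $i = j$ gives $\alpha_{kk} \ge \alpha_{ki} + \alpha_{ik}$, exactly the noisy-regime condition). For $|\Sc'| > 2$, if the zero edge is a self-loop $(i,i)$, then $i$ is a fixed point of the permutation, the single-user bound $d_i \le \alpha_{ii}$ applies, and we recurse on $\Sc' \setminus \{i\}$ without loss because $w(\Mc^*_{\Sc'}) = w(\Mc^*_{\Sc' \setminus \{i\}})$. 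If instead $i \ne j$, the edge $(i,j)$ lies inside a cycle of length $\ge 2$ in the permutation, and I would rewire this cycle at the zero-weight edge --- replacing the two edges incident to it by one self-loop and one ``shortcut'' edge --- to produce a matching of the same total weight whose cycles are strictly shorter, at which point the induction again applies.

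The main obstacle I expect is this rewiring step inside a long cycle of the optimal matching. The delicate point is that the rewired object must remain a valid matching on $\Sc'$ and must have total weight exactly $w(\Mc^*_{\Sc'})$, so that both the inductive hypothesis is applicable and the summed converse matches the facet $\sum_k \alpha_{kk} - w(\Mc^*_{\Sc'})$ rather than a loose upper bound. My strategy is to exploit maximality of $\Mc^*_{\Sc'}$ and iterate the application of (C2) on the progressively smaller vertex sets produced by each rewiring, until the matching consists only of $1$- and $2$-cycles; the unconditional single-user and noisy two-user converses then combine to deliver the required inequality, giving $\Rc^* = \Rc^{\rm TINA} = \Pc^{\rm TINA}_\Kc$.
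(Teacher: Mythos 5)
Your achievability half is fine, and so is the reduction of the converse, via Theorem~\ref{theorem:tina-represent}, to proving $\sum_{k\in\Sc'}d_k\le\sum_{k\in\Sc'}\alpha_{kk}-w(\Mc^*_{\Sc'})$ for every $\Sc'$. The gap is in the repair mechanism you propose for long cycles. Your induction needs a rewiring of $\Mc^*_{\Sc'}$ at the zero edge that produces a matching of \emph{the same} total weight with strictly shorter cycles, and such a rewiring does not exist in general: removing the zero edge $(i,j)$ forces you to also break an adjacent matched edge and re-match its endpoint, and by maximality of $\Mc^*_{\Sc'}$ the resulting matching can only lose weight. The paper's own example in Fig.~\ref{fig:tin-ex} is a counterexample to your terminal claim: there $\Mc^*=\{(1,3),(2,1),(3,2)\}$ is a $3$-cycle of weight $1.2$ containing the zero edge $(1,3)$, while the best matching built from fixed points and transpositions has weight $0.9$; so iterating your rewiring down to $1$- and $2$-cycles would only prove $d_1+d_2+d_3\le 3-0.9=2.1$, not the facet $1.8$. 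In other words, the single-user and noisy two-user converses alone cannot reproduce a facet whose maximum matching is intrinsically a long cycle, and "exploit maximality and iterate \eqref{C2}" does not close this; the obstacle you flag as the main difficulty is exactly where the argument breaks.

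The paper closes the converse differently: it does not shorten cycles at all. It invokes the genie-aided cyclic outer bounds of \cite{cyclicGIC}, as used in \cite{TIN}, which hold in the weak-interference regime without GNAJ and which, for an ordered subset $\pi=(i_0,\dots,i_{m-1})$, include the bounds $g_{\pi,k}=\sum_{j}(\alpha_{i_ji_j}-\alpha_{i_{j-1}i_j})+\alpha_{i_{k-1}i_k}$, i.e.\ the plain cyclic TIN bound with the weakest cross link of the cycle \emph{added back}. Minimizing over $k$ and over permutations of $\Sc'$ gives $\sum_{j\in\Sc'}\alpha_{jj}-\max_\pi\bigl\{\sum_j\alpha_{i_{j-1}i_j}-\min_k\alpha_{i_{k-1}i_k}\bigr\}$, and condition \eqref{C2} is used precisely to show that the maximizing cycle(s) can be taken to be those of $\Mc^*_{\Sc'}$ with the subtracted weakest link equal to zero, so the bound collapses to $\sum_{j\in\Sc'}\alpha_{jj}-w(\Mc^*_{\Sc'})$ (handling a multi-cycle $\Mc^*_{\Sc'}$ cycle by cycle). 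That use of the $g_{\pi,k}$ bounds is the ingredient your proposal is missing; without it, or some substitute genie argument valid for cycles of length $\ge 3$ under \eqref{C1}, the proof does not go through.
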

\begin{proof}
See Appendix \ref{proof:tin}.
\end{proof}

\begin{remark}
As the maximum weighted matching may not be unique, Theorem \ref{theorem:tin} holds as long as (\ref{C2}) holds for any one of the maximum matchings.
Condition (\ref{C2}) allows us to establish the optimality  of TINA since, under this condition, we can prove that the converse is tight. 
This, however, is only a sufficient condition and there might be a larger class of networks, including both the subclass defined by 
Theorem \ref{jafar-thm} and the one defined by Theorem \ref{theorem:tin}, for which TIN is GDoF-optimal. 
\end{remark}
\begin{example}
\normalfont
We illustrate the relaxed channel strength condition by the example in Fig.~\ref{fig:tin-ex}. It is easy to verify that the condition (\ref{C1}) holds for the entire network, 
while the original GNAJ condition (\ref{GNAJ-cond}) does not hold for users 1 and 2. 
Note that $\Mc^*=\{(1,3),(2,1),(3,2)\}$ is a (non-unique) maximum weighted matching and contains $\alpha'_{13} = \alpha_{13}=0$, such that also condition (\ref{C2}) holds. 
Thus, from Theorems \ref{lemma:sim_region} and \ref{theorem:tina-represent}, the TINA region of 
this network is the polytope defined by:
\begin{align*}
\Pc^{\rm TINA}_{\{1,2,3\}} = \{ (d_1,d_2,d_3) :~ &0 \le d_i \le 1, \forall i \in \{1,2,3\} \\ &d_1+d_2 \le 1.1, d_2+d_3 \le 1.3 \\ &d_1+d_3 \le 1.2, d_1+d_2+d_3 \le 1.8  \}.
\end{align*}
\hfill $\lozenge$
\begin{figure}[htb]
\centering
\includegraphics[width=0.4\columnwidth]{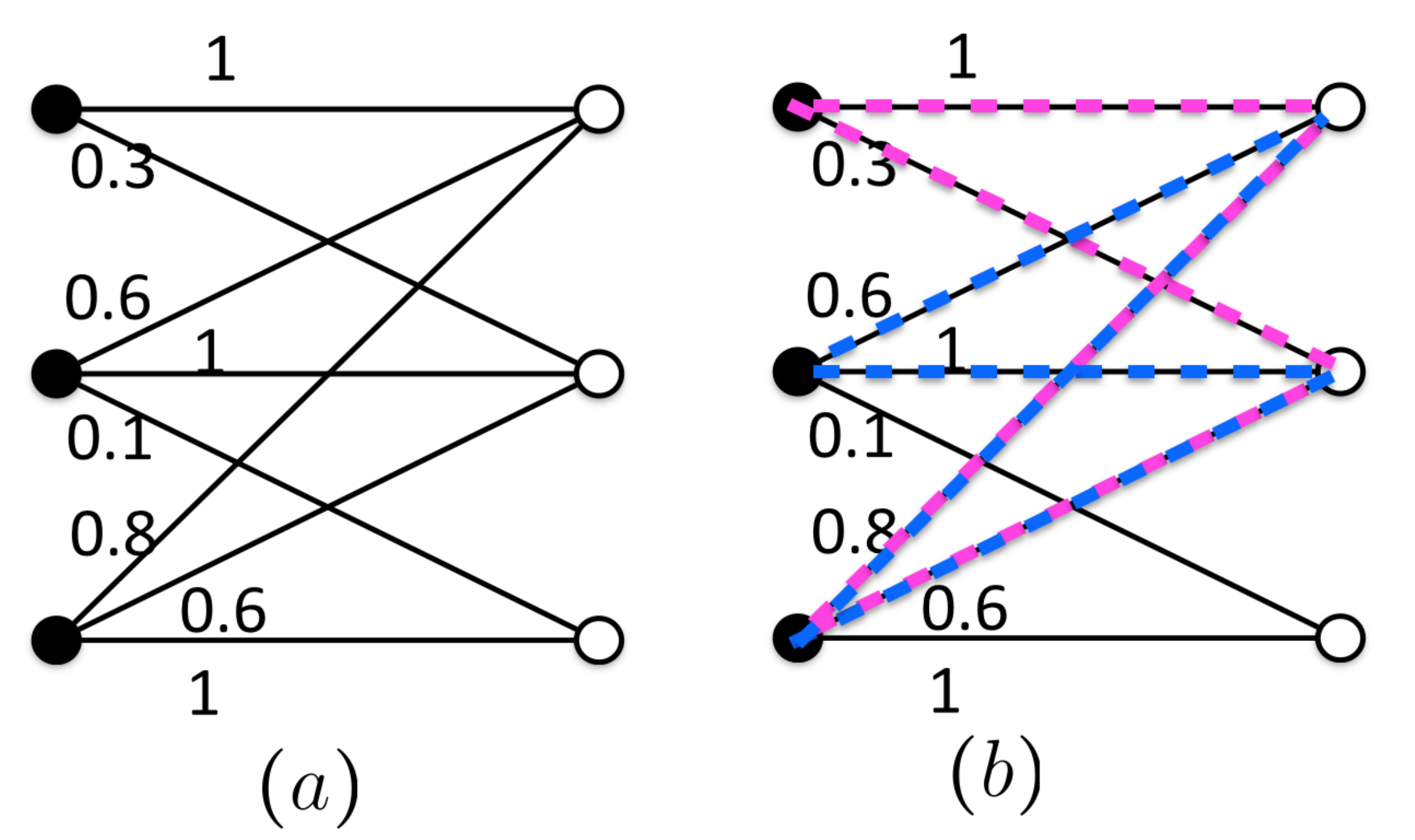}
\caption{ (a) A 3-user IC where TINA is a convex polytope. 
The value associated with each link represents the channel strength level $\alpha_{ij}$ and the missing links correspond to $\alpha_{ij} = 0$.  
(b) The links where the relaxed channel strength condition \eqref{C1} is satisfied while the GNAJ condition does not hold (marked in blue and purple).}
\label{fig:tin-ex}
\end{figure}
\end{example}
\begin{remark}
A subclass of network topologies for which (\ref{C2}) holds is the class of networks that have no perfect matchings  in any unweighted subgraph of $\Gc$ with zero-weight edges removed. A bipartite graph has no perfect matchings if Hall's condition does not  hold \cite{matching}.  The so-called triangular networks in \cite{Yi:TIM} belong to this category.
\end{remark}

\section{A GDoF-based Link Scheduling and Power Control Framework}
\label{sec:linq-pc}

In this section we capitalize on the insight about the TINA region obtained before, in order to develop a framework for link scheduling and 
power control in Gaussian $K$-users interference channels with the constraint that receivers treat interference as (Gaussian) noise. 
As anticipated in Section \ref{sec:intro}, this finds applications in practical interference management of D2D networks, where devices communicate directly to their intended destinations sharing the same channel bandwidth.  In general, the goal is to activate simultaneously a subset of links (i.e., transmitter-receiver pairs) 
with nonzero transmit power, aiming at maximizing some desired system utility function. 
The classical {\em power control} problem (e.g., as formulated in \cite{foschini1993simple,yates1995framework,hanly1995algorithm}) finds the 
componentwise minimum transmit power vector that achieves given target SINRs at the receivers, when such target SINRs are feasible. 
However, this approach does not take into account that in modern TDMA systems the links may not be active in all scheduling slots. 
In contrast, by selecting a subset of links on each slot (scheduling), higher user throughput (i.e., time-averaged rate) can be achieved. 
As anticipated in Section \ref{sec:intro}, a direct application of GP \cite{GP} also does not solve the scheduling problem, since implicitly all links must be 
allocated positive power. Intuitively, these approaches work well when SINRs significantly larger than 1 (0 dB) can be achieved for all the $K$ links. 

A general scheduling framework is provided by considering the user throughputs $T_k = \lim_{t \rightarrow \infty} \frac{1}{t} \sum_{\tau = 1}^t R_k(\tau)$, where 
$R_k(\tau)$ indicates the rate achieved by link $k$ during scheduling slot $\tau$. Let $U(T_1, \ldots, T_K)$ denote a concave componentwise non-decreasing
{\em Network Utility Function} of the user throughputs, and let $\Tc$ denote the achievable throughput region of the system. Then, a general 
{\em Network Utility Maximization} (NUM) problem is given as
\begin{subequations} \label{eq:num}
\begin{align} 
(NUM) : & \quad \max \;\; U(T_1, \ldots, T_K) \label{eq:num1}\\
& \quad {\rm s.t.} \;\;  (T_1, \ldots, T_K) \in \Tc.  \label{eq:num2}
\end{align}
\end{subequations}
In our case, we shall consider a GDoF criterion and replace $T_k$ with $\overline{d}_k = \lim_{P \rightarrow \infty} \frac{T_k}{\log P}$. 
Through an immediate time-sharing argument, we have that the achievable region of throughput-GDoF is the convex hull of $\Rc^{\rm TINA}$, denoted by 
$\conv \Rc^{\rm TINA}$. In general,  $\Rc^{\rm TINA}$ is the union of convex polytopes (see Theorem \ref{jafar-thm}), such that it is not generally convex. 
However, when (\ref{C1}) in Theorem \ref{lemma:sim_region} holds, then $\Rc^{\rm TINA} = \conv \Rc^{\rm TINA} = \Pc^{\rm TINA}_\Kc$. 
Using the GDoF  criterion, the corresponding NUM problem becomes
\begin{subequations} \label{eq:num-gdof}
\begin{align} 
(NUM-GDoF) : & \quad \max \;\; U(\overline{d}_1, \ldots, \overline{d}_K) \label{eq:num1-gdof}\\
& \quad {\rm s.t.} \;\; (\overline{d}_1, \ldots, \overline{d}_K) \in \conv \Rc^{\rm TINA}.  \label{eq:num2-gdof}
\end{align}
\end{subequations}
It turns out that the above problem can be solved by iterating over time (i.e., over the scheduling slot) a sequence of ``instantaneous'' subproblems. 
The following result is quite standard and follows as corollary of the general theory developed for example in 
\cite{neely-fnt,shirani2010mimo,neely2010stochastic,neely2010universal} (and references therein), and shall be stated without proof here for the sake of 
space limitation. 

\begin{theorem} \label{num-solver}
Consider a $K$-user single-antenna Gaussian interference channel with channel strengths $\{\alpha_{ij} : i,j \in \Kc\}$, and 
corresponding TINA region  $ \Rc^{\rm TINA}$. For a sequence of scheduling slots indexed by $t = 1,2,3, \ldots$, consider the following iterative procedure:
\begin{enumerate}
\item Initialize weights $w_k(1) = 1$ for all $k \in \Kc$.
\item For $t = 1, 2, \ldots$, repeat the following two steps: 
\begin{itemize}
\item Compute the GDoF-tuple $(d^*_1(t), \ldots, d^*_K(t))$ solution of the max weighted sum GDoF problem
\begin{subequations} \label{eq:sum-gdof}
\begin{align} 
(SUM-GDoF) : &  \quad \max \;\; \sum_{k\in \Kc} w_k(t) d_k \label{eq:sum1-gdof}\\
& \quad {\rm s.t.} \;\; (d_1, \ldots, d_K) \in \Rc^{\rm TINA}.  \label{eq:sum2-gdof}
\end{align}
\end{subequations}
\item Update the weights according to
\begin{equation} \label{weight-update}
w_k(t+1) = \max\left \{ 0, w_k(t) - d^*_k(t) + a^*_k(t) \right \} , 
\end{equation}
where $(a_1^*(t), \ldots, a^*_k(t))$ is the solution of the convex optimization problem
\begin{subequations} \label{eq:arrival}
\begin{align} 
& \max \;\; V U(a_1, \ldots, a_K) - \sum_{k \in \Kc} w_k(t) a_k \label{eq:arrival1}\\
& {\rm s.t.} \;\; (a_1, \ldots, a_K) \in [0, A_{\max}]^K,  \label{eq:arrival2}
\end{align}
\end{subequations}
where $V > 0$ and $A_{\max} > 0$ are control parameters of the algorithm. 
\end{itemize}
\end{enumerate}
Then, for sufficiently large $A_{\max}$ we have that 
\begin{equation} \label{optimality-NUM}
\lim_{t \rightarrow \infty} U \left ( \frac{1}{t} \sum_{\tau=1}^t d^*_1(\tau), \ldots, \frac{1}{t} \sum_{\tau=1}^t d^*_K(\tau) \right ) \geq U(\overline{d}_1^*, \ldots, \overline{d}^*_K) - \frac{\kappa}{V}, 
\end{equation}
where $(\overline{d}_1^*, \ldots, \overline{d}^*_K)$ is the solution of  the NUM-GDoF problem (\ref{eq:num-gdof}), and
$\kappa$ is a constant that depends on the system parameters but is independent of $V$. Hence, the above iterative scheduling algorithm
can approach the optimal value of (\ref{eq:num-gdof}) by any desired accuracy.
\end{theorem}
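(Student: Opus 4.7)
The plan is to recognize the iterative procedure as an instance of the Lyapunov drift-plus-penalty framework (e.g.\ \cite{neely-fnt,neely2010stochastic}) and to apply the standard telescoping argument. First, I would interpret $w_k(t)$ as the backlog of a virtual queue at link $k$, receiving virtual arrivals $a_k^*(t)$ and served at rate $d_k^*(t)$: the update (\ref{weight-update}) is then exactly the Lindley recursion $w_k(t+1) = \max\{0, w_k(t) - d_k^*(t) + a_k^*(t)\}$. The goal is to show that (i) these virtual queues are mean-rate stable under the algorithm, which forces the time-averaged service rates $\bar d_k$ and virtual arrivals $\bar a_k$ to coincide, and (ii) the time-averaged virtual arrivals approach the NUM-GDoF optimizer.

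I would start from the quadratic Lyapunov function $L(t) = \tfrac{1}{2}\sum_{k\in\Kc} w_k(t)^2$ and compute its one-step drift $\Delta(t) = L(t+1) - L(t)$. Squaring the Lindley recursion and discarding negative terms gives the standard bound
\begin{equation}
\Delta(t) \le \kappa_0 + \sum_{k\in\Kc} w_k(t)\bigl(a_k^*(t) - d_k^*(t)\bigr),
\end{equation}
where $\kappa_0$ depends only on $A_{\max}$ and on the maximum per-slot GDoF (bounded by $\max_k \alpha_{kk}$). Subtracting $V\,U(a_1^*(t),\ldots,a_K^*(t))$ from both sides yields the drift-plus-penalty expression, and the key observation is that the two per-slot subproblems in the algorithm are exactly the separable decomposition that minimizes this right-hand side: (\ref{eq:sum-gdof}) maximizes $\sum_k w_k(t)\,d_k$ over $\Rc^{\rm TINA}$, while (\ref{eq:arrival}) maximizes $V\,U(a) - \sum_k w_k(t)\,a_k$ over the box $[0,A_{\max}]^K$. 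Crucially, maximizing a linear functional over the possibly nonconvex set $\Rc^{\rm TINA}$ is equivalent to maximizing it over $\conv\Rc^{\rm TINA}$, since the optimum is attained at an extreme point that lies in $\Rc^{\rm TINA}$ itself; this is what makes the instantaneous subproblem (\ref{eq:sum-gdof}) sufficient, even though the NUM-GDoF feasible set is $\conv\Rc^{\rm TINA}$.

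Next, I would compare this minimized drift-plus-penalty against that achieved by a stationary randomized benchmark. For any $\epsilon>0$, Carath\'eodory's theorem represents $(\bar d_1^*,\ldots,\bar d_K^*)-\epsilon\mathbf{1} \in \conv\Rc^{\rm TINA}$ as a convex combination of at most $K+1$ vertices of $\Rc^{\rm TINA}$, which can be sampled i.i.d.\ across slots, together with constant virtual arrivals $a_k(t) = \bar d_k^* - \epsilon$. Evaluating the drift-plus-penalty bound under this benchmark and invoking the per-slot optimality of the algorithm gives
\begin{equation}
\Delta(t) - V\,U(a_1^*(t),\ldots,a_K^*(t)) \le \kappa_0 - V\,U(\bar d_1^*,\ldots,\bar d_K^*) + O(\epsilon).
\end{equation}
Telescoping from $\tau=1$ to $t$, dividing by $t$, using $L(t+1)\ge 0$ and $L(1)=0$, applying Jensen's inequality to the concave $U$, and sending $\epsilon\to 0$, I arrive at (\ref{optimality-NUM}) with $\kappa = \kappa_0$.

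The main obstacle is the last step: establishing mean-rate stability of the $w_k(t)$ under the algorithm, which is what allows one to replace the time-averaged $a_k^*(\tau)$ by the time-averaged $d_k^*(\tau)$ inside $U$ when converting the bound on virtual-arrival utility into a bound on GDoF utility. The standard way to secure this is to choose $A_{\max}$ strictly larger than the largest coordinate of any point in $\Rc^{\rm TINA}$, so that the virtual-arrival region strictly dominates the achievable GDoF region, and then to combine the drift bound above with a Foster-Lyapunov argument to conclude $\sup_t \E[L(t)] < \infty$, hence $\frac{1}{t}\E[w_k(t)] \to 0$. A secondary subtlety, already flagged, is that $\Rc^{\rm TINA}$ need not be convex; this is handled by the linear-functional remark above and is precisely why only the convex hull appears in the formulation of NUM-GDoF.
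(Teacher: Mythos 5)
Your drift-plus-penalty reconstruction is essentially the argument the paper has in mind: Theorem \ref{num-solver} is stated in the paper explicitly without proof, as a corollary of the standard Lyapunov virtual-queue framework of \cite{neely-fnt,neely2010stochastic}, and that framework is exactly what you instantiate (weights as virtual queues under the Lindley recursion, quadratic Lyapunov drift bound, per-slot separable maximization over $\Rc^{\rm TINA}$ and the arrival box, comparison with a randomized benchmark in $\conv \Rc^{\rm TINA}$ via Carath\'eodory, telescoping plus Jensen, and queue stability to pass from time-averaged arrivals to time-averaged GDoFs, with the correct observation that maximizing a linear functional over the nonconvex $\Rc^{\rm TINA}$ equals maximizing it over its convex hull). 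The only detail worth tightening is your stability step: mean-rate stability does not come from $A_{\max}$ ``dominating'' $\Rc^{\rm TINA}$, but from the fact that a large $w_k(t)$ forces $a_k^*(t)$ in (\ref{eq:arrival}) to be small (yielding the usual deterministic $O(V)+A_{\max}$ bound on the weights), whereas the requirement that $A_{\max}$ be sufficiently large is only needed so that the benchmark arrival $a_k=\overline{d}_k^*$ is feasible.
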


\begin{remark} It is also possible to show that the time over which the limit in (\ref{optimality-NUM}) is closely approached
grows as $O(V)$. Therefore, in practice there is a tradeoff between how close we can approach the optimal network utility function value, 
and how quickly the scheduling algorithm converges. Nevertheless, here we are not concerned with this problem, and 
we use Theorem \ref{num-solver} as a general tool to translate a NUM problem in terms of the long-term time averaged rates (or GDoF, in our case)
into a sequence of ``instantaneous'' (i.e., to be solved at each scheduling slot) max weighted sum GDoF problem. 
\end{remark}

It follows that, from now on, we shall be concerned with solving 
the max weighted sum GDoF problem (\ref{eq:sum-gdof}) for an arbitrary set of weights $(w_1, \ldots, w_K)$. 
The ``power control'' aspect of the problem reside in the fact that, when a solution
$(d_1^*(t), \ldots, d_K^*(t))$ of (\ref{eq:sum-gdof}) is found, we must also provide the powers at which
the links have to transmit in order to realize such GDoF point in the TINA region. As anticipated in Section \ref{sec:intro}, 
such transmit powers are generally not unique, and in this case we aim at finding the 
globally optimal power control assignment for the desired GDoF-tuple. 

In what follows, we first introduce the exact GDoF-based solution, and then subsequently simplify it until we could 
obtain an approximation with polynomial-time complexity. In turns, the exact or approximate solver of (\ref{eq:sum-gdof}) can be plugged into 
the iterative scheduling algorithm of Theorem \ref{num-solver} in order to obtain a scheme that works for any suitable 
network utility function. For example, if throughput max-min fairness is desired, we can choose $U(\overline{d}_1, \ldots, \overline{d}_K) = \min_k \overline{d}_k$. Instead,  if proportional fairness is desired, we can choose $U(\overline{d}_1, \ldots, \overline{d}_K) = \sum_{k \in \Kc} \log (\overline{d}_k)$.

\subsection{Exact Joint Solution is Hard}

We re-write (\ref{eq:sum-gdof}) more conveniently in the form:
\begin{subequations} \label{eq:disjunct}
\begin{align}
(DP): \quad \max_{\{d_k\}} \quad & \sum_{k \in \Kc} w_k d_k\\
{\rm s.t.} \quad& (d_1,\dots,d_K) \in \bigcup_{\Sc \subseteq \Kc} \Pc_{\Sc}^{\rm TINA},
\end{align}
\end{subequations}
which can be categorized as an instance of Disjunctive Programming (DP) \cite{disjunctive}. 
The union involves $2^K-1$ nontrivial polyhedra, and $\Pc_{\Sc}^{\rm TINA}$ is described by $2^{\abs{\Sc}}-1$ linear inequalities. As mentioned earlier, the union is not necessarily leading to a convex polytope, and thus the problem is not a convex optimization problem in general. Nevertheless, it can be transformed to an equivalent convex optimization problem by replacing $\bigcup_{\Sc \subseteq \Kc} \Pc_{\Sc}^{\rm TINA}$ with its convex hull
$\Qc = \conv \left( \bigcup_{\Sc \subseteq \Kc} \Pc_{\Sc}^{\rm TINA} \right)$.

The full description of $\Qc$ may require an exponential number of inequalities, yet $\Qc$ has a compact representation in a higher-dimensional space.
The so-called lift-and-project cutting plane method \cite{lift-project,lift-project2} can be employed to offer an exact solution to this problem. The principle consists of three steps: (1) lift the subspace spanned by GDoF tuples into a higher-dimensional space by introducing some auxiliary variables, (2) obtain the compact representation in the form of a set of lift-and-project cutting planes, and (3) project the compact representation onto the original GDoF spanned subspace. These cutting planes are valid for the closure of the convex hull $\Qc$, and can be generated by solving cutting generating linear programs derived from the higher dimensional representation (see \cite{lift-project,lift-project2} and references therein).

Once we obtain the GDoF tuple maximizing the weighted sum-GDoF, the second step is to use either the potential graph based centralized power allocation algorithm found in \cite{TIN} or the assignment problem inspired algorithms (e.g., centralized Hungarian method or distributed Auction algorithm) presented in this paper to find the globally optimal power allocation parameters. This is illustrated in Fig. \ref{fig:jt}.

\begin{figure}[htb]
\centering
\includegraphics[width=0.6\columnwidth]{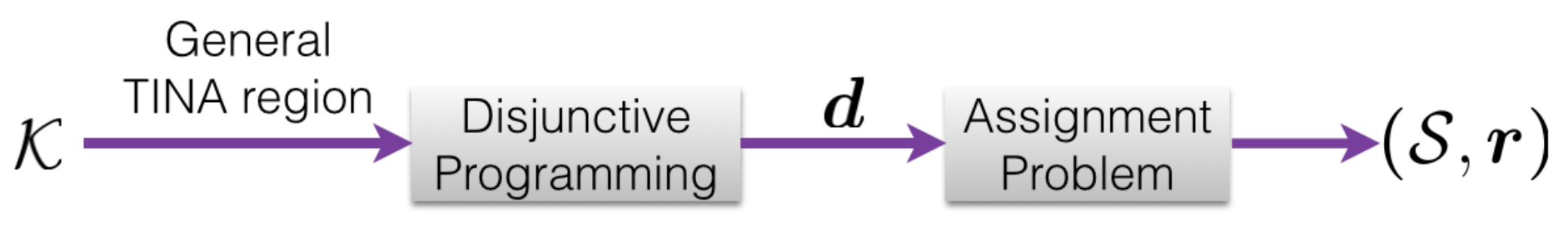}
\caption{The exact solution, where the disjunctive program identifies the optimal GDoF tuple $\dv$ in which the links in the subset $\Sc$ with positive GDoF values are scheduled, and the assignment problem outputs the power allocation vector $\rv$.}
\label{fig:jt}
\end{figure}

Unfortunately, the cutting generating linear programs still involve exponential number of constraints. This fact prohibits the application of this exact solution via disjunctive programming for network of practical size (e.g., a few tens to a few hundreds of D2D links). As such, reasonable approximation and heuristic approaches are desirable, although the global optimality is not guaranteed.

\subsection{Separated Link Scheduling and Power Control }
In view of the complexity of the exact solution, we resort to separated link scheduling and power control. We can first select heuristically a subset of links $\Sc$ whose TINA region $\Pc_\Sc^{\rm TINA}$ contains a GDoF tuple that leads to a reasonably large weighted sum-GDoF for the given user profiles (i.e., weights). Then, we can find this GDoF tuple $\dv$ by optimizing the linear program (i.e., the weighted sum-GDoF) within the TINA region $\Pc_\Sc^{\rm TINA}$. Finally, given this GDoF tuple $\dv$, we can use the assignment inspired algorithms to obtain the globally optimal power allocation parameters $\rv$. This procedure is illustrated in Fig. \ref{fig:sptlp}.

\begin{figure}[htb]
\centering
\includegraphics[width=0.7\columnwidth]{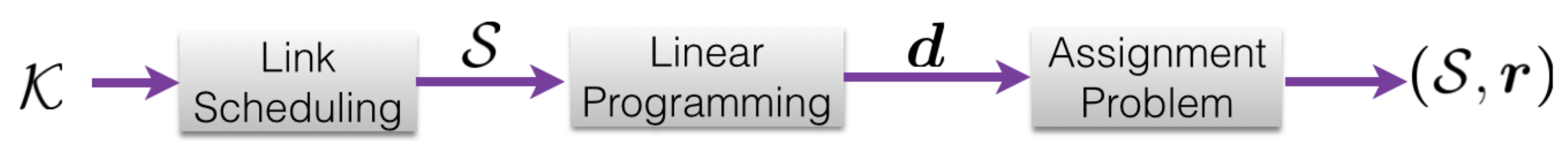}
\caption{The separated solution with a heuristic link scheduling and assignment-inspired power control, where the optimal GDoF tuple is obtained via linear programming over the selected convex polytope.}
\label{fig:sptlp}
\end{figure}

\subsubsection{Link Scheduling}
Recall that in the disjunctive programming \eqref{eq:disjunct}, the objective function can be regarded as a moving hyperplane, and the constraint is the union of polyhedra (convex polytopes). A locally optimal weighted sum-GDoF solution is met when the hyperplane touches one of the vertices as a tangent plane to one convex polytope. The vertices of the largest convex polytope meet such hyperplanes with high probability, such that the weighted sum-GDoF can be maximized with high probability in such a polytope.
As such, a heuristic link scheduling is to select the largest subset of links whose corresponding TINA region is the largest polyhedron (convex polytope) among all polyhedra. 

Similarly to the so-called information theoretic independent sets (ITIS) introduced in \cite{ITLINQ}, we define an independent set according to channel strength levels of our new TIN condition in Theorem \ref{lemma:sim_region}.
\begin{definition}
A user subset $\Sc$ is called an improved information theoretic independent set (referred to hereafter as ``{\rm ITIS+}''), if for any link $k \in \Sc$
\begin{align}
\alpha_{kk} \ge \max_{i,j \in \Sc \backslash \{k\}} \{\alpha_{ik} + \alpha_{kj} - \alpha'_{ij}\}. 
\end{align}
\end{definition}
\begin{remark}
The improvement is in the sense that as long as a subset of users forms an ITIS, it automatically forms an ITIS+. The reverse claim is not true. As such, an ITIS+ contains no less links than an ITIS, and thus the TINA region of the former includes that of the latter.
\end{remark}
The TINA regions of the selected users in ITIS+ are convex polytopes. The largest convex polytope corresponds to the links that form the largest ISIT+. As such, link scheduling turns out to be finding the largest ISIT+.
Although the GDoF optimality does not always hold in the largest ISIT+, it offers the potential to achieve a larger sum throughput than that obtained by ITIS. 

\subsubsection{Power Control}
As said earlier, we can first find the optimal GDoF tuple by maximizing the weighted sum-GDoF for the given set of scheduled links $\Sc$ over the convex polytope $\Pc_{\Sc}^{\rm TINA}$. Once we find such a GDoF tuple, we can apply assignment inspired algorithms to obtain the globally optimal power allocation parameters.

For a given ITIS+ $\Sc$ and the associated user profiles (e.g., weights of individual rates), the GDoF tuple with the maximum weighted sum GDoF can be identified by solving the following linear problem:
\begin{subequations} \label{eq:lp}
\begin{align}
(LP): \quad \max_{d_i} \quad & \sum_{i\in \Sc} w_i d_i\\
{\rm s.t.} \quad& (d_1,\dots,d_K) \in \Pc_{\Sc}^{\rm TINA}
\end{align}
\end{subequations}
For the linear programs, we have polynomial-time complexity algorithms in terms of the numbers of variables and constraints, such as simplex method, interior-point method.
As there is usually an exponential number of constraints (i.e., $2^{\abs{\Sc}}-1$) in $\Pc_{\Sc}^{\rm TINA}$, thus the complexity solving this linear program is still exponential with respect to $\abs{\Sc}$. The problem as to how to reduce the size of constraints by exploiting the special structure of linear program is an interesting problem yet beyond the scope of this paper.

\subsection{Replacing Linear Programming by Geometric Programming}
The exponential-time complexity of the linear problem in \eqref{eq:lp} prohibits its application in large D2D networks. To bypass this challenge, we replace it by a geometric program, which can be solved with polynomial-time complexity, together with a recalculation of GDoF.

As illustrated in Fig.~\ref{fig:sptgp}, given a selected subset of links $\Sc$, Linear Programming (LP) is replaced by Geometric Programming \cite{GP} (GP) to maximize the weighted sum rate at high SNR, and we obtain the corresponding power allocation vector $\rv'$. Basically, this power allocation $\rv'$ achieves the maximum weighted sum throughput yet leads by no mean to the minimum power allocation. To obtain the minimum power allocation, we proceed further and regenerate the GDoF tuple $\dv$ from the power allocation $\rv'$ according to \eqref{eq:ind-gdof}. Finally, assignment-inspired algorithms are utilized to obtain the minimum (globally optimal) power allocation parameters $\rv$. Admittedly, GP can be solely used for power control. Yet, by concatenating with the assignment-inspired algorithms, it lends itself to a lower power consumption, i.e., $\rv' \ge \rv$.

\begin{figure}[htb]
\centering
\includegraphics[width=0.7\columnwidth]{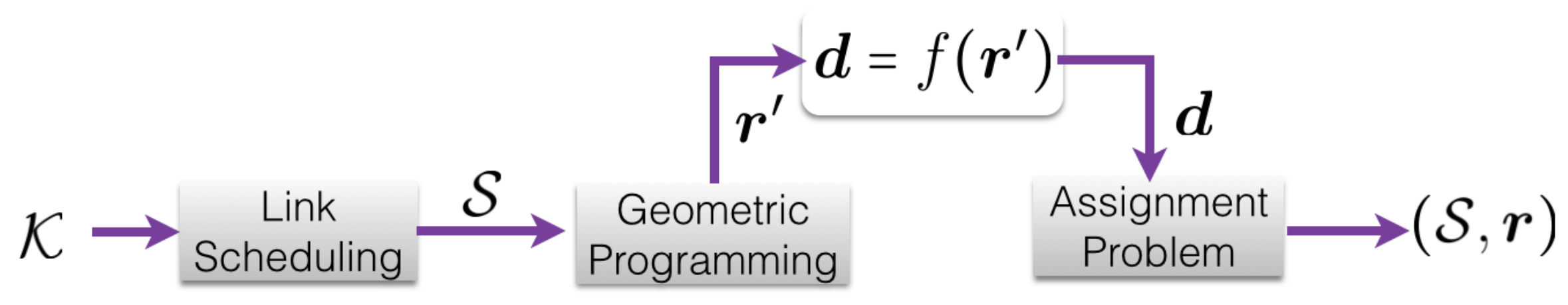}
\caption{The separated solution with a heuristic link scheduling and assignment-inspired power control, where the optimal GDoF tuple is obtained instead via geometric programming.}
\label{fig:sptgp}
\end{figure}

In what follows, we establish the equivalence of such a replacement.
Given the user profiles $\{w_i, i \in \Sc\}$ of a selected use subset $\Sc$, we initially aim at solving the following optimization problem:
\begin{subequations} \label{eq:weighted-rate}
\begin{align}
\max_{\{P_i\}} \quad & \sum_{i \in \Sc} w_i \log (1+{\rm SINR}_i)\\
{\rm s.t.} \quad & {\rm SINR}_i = \frac{G_{ii} P_i}{1+\sum_{j \ne i} G_{ji} P_j}\\
& 0 \le P_i \le 1
\end{align}
\end{subequations}
where $G_{ij}=\Abs{h_{ij}}$ is the channel gain between Tx-$i$ and Rx-$j$.
Let us introduce an auxiliary variable $t_i = \frac{1}{{\rm SINR}_i}$ where $t_i$ is a posynomial function of $\{P_i, i \in \Sc\}$. Thus, the optimization problem at high SNR can be approximated to
\begin{subequations} \label{eq:gp}
\begin{align}
(GP): \quad \min_{\{P_i,t_i\}} \quad& \prod_{i\in {\Sc}} t_i^{w_i}\\
{\rm s.t.}\quad& \frac{1+\sum_{j \ne i} G_{ji} P_j}{G_{ii} P_i} \le t_i, \quad \forall i \in \Sc \\
& 0 \le P_i \le 1
\end{align}
\end{subequations}
which is a geometric program with regard to $\{P_i,t_i,i \in \Sc\}$. A brief description of GP can be found in Appendix.

The equivalence between the LP in \eqref{eq:lp} and the GP in \eqref{eq:gp} is due to the following proposition.
\begin{proposition}
Given a user subset $\Sc$ and the profile $\{w_i, i \in \Sc\}$, if the condition in Theorem \ref{lemma:sim_region} holds, the power allocation $\{P_i, i \in \Sc\}$ by GP in \eqref{eq:gp} is equivalent to LP in \eqref{eq:lp}, in the sense that both approaches achieve the same weighted sum rate at high SNR.
\end{proposition}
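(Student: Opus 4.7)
The plan is to show that, in the high-SNR (equivalently, GDoF) regime, the GP in \eqref{eq:gp} is nothing but a multiplicative re-writing of the weighted sum-GDoF maximization, and therefore, under the condition of Theorem \ref{lemma:sim_region} (where $\Pc_\Sc^{\rm TINA}$ is a convex polytope), its optimum coincides with the LP optimum. I would do this in three steps.

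First, I would translate the GP objective into a GDoF expression. Substituting $P_i = P^{r_i}$ with $r_i \le 0$ and letting $P \to \infty$, the posynomial constraint $t_i \ge (1+\sum_{j \neq i} G_{ji} P_j)/(G_{ii} P_i)$, tight at the optimum, gives $-\log t_i / \log P \to \alpha_{ii} + r_i - \max\{0, \max_{j \neq i}(\alpha_{ji} + r_j)\} = d_i$ by \eqref{eq:ind-gdof}. Hence
\begin{equation*}
\min \prod_{i\in\Sc} t_i^{w_i} \;\;\Longleftrightarrow\;\; \min \sum_{i\in\Sc} w_i \log t_i \;\;\Longleftrightarrow\;\; \max \sum_{i\in\Sc} w_i d_i \;\;\text{at high SNR.}
\end{equation*}

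Second, I would match the feasible sets. By Theorem \ref{theorem:power}, for every assignment of $(r_i : i\in\Sc) \in \RR^{\abs{\Sc}}_{-}$ the induced GDoF tuple defined by \eqref{eq:ind-gdof} lies in $\Pc_\Sc^{\rm TINA}$; and conversely, every tuple in $\Pc_\Sc^{\rm TINA}$ is realized by some such $r$-vector through the assignment-based power allocation. Thus the set of GDoF tuples achievable by the GP as $P \to \infty$ is exactly $\Pc_\Sc^{\rm TINA}$, the same feasible set that defines the LP \eqref{eq:lp}.

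Third, I would invoke Theorem \ref{lemma:sim_region}: under condition \eqref{C1}, $\Pc_\Sc^{\rm TINA}$ is a convex polytope, so both programs maximize the same linear functional $\sum_i w_i d_i$ over the same convex polytope, hence attain the same optimal value, call it $d_\Sc^\star \log P$. Combining steps one and two, the weighted sum rate achieved by the GP power allocation $\{P_i^\star\}$ satisfies
\begin{equation*}
\sum_{i\in\Sc} w_i \log(1+{\rm SINR}_i(P^\star)) = d_\Sc^\star \log P + o(\log P),
\end{equation*}
which matches (to within $o(\log P)$) the value $\sum_{i\in\Sc} w_i d_i^\star \log P$ attained by the LP-optimal GDoF tuple realized through the Kuhn--Munkres/Auction power assignment of Theorem \ref{theorem:power}. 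Equivalence in the claimed sense follows.

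The main obstacle is the second step: one has to argue carefully that, at the GP optimum, the relaxed inequality constraint is tight so that $\log t_i = -\log {\rm SINR}_i$, and that the GP explores the full set of $r$-vectors producing tuples in $\Pc_\Sc^{\rm TINA}$ rather than a strict subset. Here the componentwise monotonicity of $t_i$ in $P_i$ and the monotone-ratio structure of the weighted sum GDoF as a function of $r$ are what make the tightness argument go through, and the convexity of $\Pc_\Sc^{\rm TINA}$ guaranteed by Theorem \ref{lemma:sim_region} is what forbids the GP from getting stuck at a spurious local optimum outside the LP optimal face.
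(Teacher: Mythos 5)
Your proposal is correct and follows essentially the same route as the paper: substitute $P_i = P^{r_i}$, identify $t_i$ with $P^{-d_i}$, and show that in the high-SNR limit the GP reduces to the weighted sum-GDoF LP over the same feasible region $\Pc_{\Sc}^{\rm TINA}$. The only difference is cosmetic: the paper quantifies the approximation explicitly via the log-sum-exp bound $\epsilon_i \le \log\abs{\Sc}/\log P$, whereas you absorb it into an $o(\log P)$ term.
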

\begin{proof}
Let $G_{ij}=\max\{1,P\Abs{h_{ij}} \}=P^{\alpha_{ij}}$ and $P_i = \E [\Abs{{X}_i(t)}] =P^{r_i}  \le 1$. Substituting them into GP, we have
\begin{subequations}
\begin{align}
\min_{\{r_i,t_i\}} \quad& \prod_{i \in {\Sc}} t_i^{w_i}\\
{\rm s.t.}\quad & \frac{1+\sum_{j \ne i} {P^{\alpha_{ji} + r_j}}}{P^{\alpha_{ii}+r_i}} \le t_i \\
& -\infty \le r_i \le 0
\end{align}
\end{subequations}
By replacing $t_i$ with $P^{-d_i}$, we have an equivalent formulation:
\begin{subequations}
\begin{align}
\max_{\{r_i,d_i\}} \quad& \sum_{i \in {\Sc}} w_i d_i\\
{\rm s.t.}\quad & 1+\sum_{j \ne i} P^{\alpha_{ji}+r_j} \le P^{-d_i + \alpha_{ii}+r_i} \\
& r_i \le 0, \quad \forall i \in \Sc
\end{align}
\end{subequations}
Note that the log-sum-exp function can be rewritten as 
\begin{align}
\log (1+\sum_{j \ne i} P^{\alpha_{ji}+r_j}) &= (\max_{j} z_j^i) \log P + \log \sum_{j} P^{z_j^i-\max_{j'} z_{j'}^i} \nn\\
&= (\max_{j} z_j^i + \epsilon_i)  \log P
\end{align}
where $\epsilon_i=\frac{\log \sum_{j} P^{z_j^i-\max_{j'} z_{j'}^i}}{\log P}$ and
\begin{align}
z_j^i=\left\{\Pmatrix{ {\alpha_{ji}+r_j}, & j \ne i \\ 0, & j=i } \right..
\end{align}
It is easily verified that $0 \le \epsilon \le \frac{\log \abs{\Sc}}{\log P}$, and thus the second term in RHS is always bounded within $[0, \log K]$. Thus, we can rewrite the linear program as:
\begin{subequations} \label{eq:gp-equ-form}
\begin{align}
\max_{\{r_i,d_i\}} \quad& \sum_{i \in {\Sc}} w_i d_i\\
{\rm s.t.}\quad & d_i \le \alpha_{ii}+r_i -  \max_{j \ne i} \{0, ({\alpha_{ji}+r_j})\} - \epsilon_i  \\
& d_i \ge 0, \, r_i \le 0, \quad \forall i \in \Sc
\end{align}
\end{subequations}
At high SNR ($P \to \infty$), $\epsilon_i \to 0$. It is not hard to verify that the feasible region of $(d_i: i \in \Sc)$ in the above linear program is exactly the one in \eqref{eq:lp} formulated by taking GDoF metric into account. This completes the proof.
\end{proof}

\begin{remark}
As mentioned in \cite{TIN-Com}, for any rate tuple, there only exists a unique locally optimal power vector, which is also globally optimal (i.e., element-wise minimal), while there are multiple locally optimal power vectors for a GDoF tuple. That being said, multiple locally optimal power vectors lead to the same GDoF tuple, but only one locally optimal power vector is globally optimal, which can be obtained by assignment-inspired algorithms or the algorithms in \cite{TIN-Com}. The concatenation of the geometric program and assignment-inspired algorithms offers maximal sum throughput at high SNR as well as minimal power consumption. Note that they are not contradict, since the network is operated in an interference limited regime such that the sum rate is not immediately and obviously related to transmit power. 
\end{remark}

\section{ITLinQ+: A Decentralized Implementation in D2D Communications}
\label{sec:algorithms}
In D2D communications, smart devices are distributively located such that they have to make their decisions uncoordinatedly. As such, a decentralized implementation of the link scheduling and power control framework in Section \ref{sec:linq-pc} is desirable and of great interest to system designers. 
FlashLinQ \cite{FlashLinQ} and ITLinQ \cite{ITLINQ} are two low-complexity distributed link scheduling mechanisms for D2D networks with reasonable signaling overhead. 

In this section, we propose a decentralized mechanism (referred to as ``ITLinQ+'') for the link scheduling and power control framework aforementioned in Fig. \ref{fig:sptgp}. 
Our proposed ITLinQ+ mechanism consists of three ingredients: (1) a decentralized implementation of link scheduling to find the largest ITIS+, (2) a decentralized GP implementation to find the power allocation vector and the corresponding GDoF tuple, and (3) a distributed Auction algorithm to solve the assignment problem and yield the globally minimal power allocation parameters.
These ingredients are detailed in the following subsections.

\subsection{Decentralized Implementation of Link Scheduling}

To figure out the largest ITIS+, it requires to coordinate all devices and enumerate all possible combinations, which is expensive to coordinate and difficult to enumerate. A decentralized link scheduling criterion with a greedy-flavor independent sets selection, a reasonable amount of signaling, and a comparable complexity to FlashLinQ and ITLinQ, is highly demanded.

The decentralized link scheduling of ITLinQ+ is comprised of two phases: link scheduling and signaling. We first present the link scheduling and then figure out the information that needs to pass.
\subsubsection*{Link Scheduling Phase}
We first order the links according to their priorities (e.g., weights of throughput). To initialize, the links with highest priority will be first selected, and new links will be added to them later. At a certain point, suppose we have selected some links in $\Sc=\{i_1,\dots,i_{k-1}\}$. Whether or not a new link $i_k$ is suitable to be scheduled depends on the following conditions. For the convenience of comparison with FlashLinQ and ITLinQ, we adopt the notations ${\rm SNR}_k \eqdef P^{\alpha_{kk}}$ and ${\rm INR}_{ij} \eqdef P^{\alpha_{ij}}$, where the noise power is normalized.
\begin{itemize}
\item At Tx-${i_k}$, check if the following condition is satisfied:
\begin{align}
{\rm SNR}_{i_k}^\eta \ge  \frac{{\rm INR}_{i_ki_j}} {(\min_{s < k, s \neq j} \{{\rm INR}_{i_si_j}\})^{\gamma}}, \quad \forall~j<k, i_j \in \Sc
\end{align}
where $\eta,\gamma \in [0,1]$ are design parameters, and $\min_{s < k, s \neq j} \{{\rm INR}_{i_si_j}\}$ is the least channel strength level of incoming interfering links of Rx-$i_j$.
\item At Rx-${i_k}$, check if the following condition is satisfied:
\begin{align}
{\rm SNR}_{i_k}^\eta \ge  \frac{{\rm INR}_{i_ji_k}} {(\min_{s < k, s \neq j} \{{\rm INR}_{i_ji_s}\})^{\gamma}}, \quad \forall~j<k, i_j \in \Sc  
\end{align}
where $\min_{s < k, s \neq j} \{{\rm INR}_{i_ji_s}\}$ is the least channel strength level of outgoing interfering links of Tx-$i_j$.
\end{itemize}
If these two conditions are satisfied, then this new link ${i_k}$ can be scheduled, i.e., $\Sc \gets \Sc \cup \{i_k\}$. Note that the minimum value of INR is initialized to be 1 such that the second link $i_2$ is selected if ${\rm SNR}_{i_2}^\eta \ge  \max \{{\rm INR}_{i_1i_2}, {\rm INR}_{i_2i_1} \} $ is satisfied.

\subsubsection*{Link Signaling Phase}
The potential performance improvement of ITLinQ+ over FlashLinQ and ITLinQ is at the expense of additional signaling among devices.
Before the link scheduling phase, we have two rounds of signaling to inform transmitters and receivers the channel strength information, as did in FlashLinQ \cite{FlashLinQ} and ITLinQ \cite{ITLINQ}.
\begin{itemize}
\item In the first round, similarly to FlashLinQ, the transmitted signals from all transmitters are sent with full power $P$ in different frequency bands, such that each receiver is able to estimate the channel strength of the channels to which it is connected. 
\item In the second round, similarly to ITLinQ, pilot signals are sent from all receivers with full power $P$ from each receiver in different frequency bands. As such, the transmitters estimate the channel strength of the channels to which it is connected.
\end{itemize}
At the end of this procedure, for any link $k$, the local channel strength information $\{{\rm INR}_{ki}, \forall~i\}$ and ${\rm SNR}_k$ are accessible at transmitter $k$, and $\{{\rm INR}_{jk},\forall~j\}$ and ${\rm SNR}_k$ at receiver $k$. 

Another additional signaling cost happens at the end of each successful link selection. The transmitter and receiver $i_j$ $(i_j \in \Sc)$ have to inform the next being checked links the minimum interfering channel strength (i.e., ${\min_{i_s \in \Sc, s \neq j} \{{\rm INR}_{i_ji_s}\}}$ and ${\min_{i_s \in \Sc, s \neq j} \{{\rm INR}_{i_si_j}\}}$, respectively). This signaling can be done similarly as the above signaling procedure, or by broadcasting.

To reduce this additional signaling overhead, we can replace both ${\min_{i_s \in \Sc, s \neq j} \{{\rm INR}_{i_si_j}\}}$ and ${\min_{i_s \in \Sc, s \neq j} \{{\rm INR}_{i_ji_s}\}}$ by ${\min_{i_s \in \Sc, s \neq j} \{{\rm INR}_{i_si_j}, {\rm INR}_{i_ji_s}\}}$. The next being checked link $i_k$ is required to compare the minimum value among all cross links in $\Sc$ with the links to which it is connected (i.e., ${\rm INR}_{i_ki_j}$, $\forall i_j \in \Sc$). If the minimum value does not change with the newly selected link, then transmitter or receiver $i_k$ does not need to do anything. Otherwise, link $i_k$ has to inform the next being checked links this updated minimum value.
Finally, the decision criterion of this decentralized link scheduling can be stated in words: A link is scheduled if the interference caused to/received from the already selected higher-priority links is smaller than the product of the signal strength with {\em an exponent $\eta$} of this link and the signal strength with {\em an exponent $\gamma$} of the weakest interfering link among the already selected links. 

\subsection{Decentralized GP Implementation to Find the Optimal GDoF-tuple}
In what follows, we will consider a distributed implementation of GP \cite{GP} where  Rx-$i$ has access to the local knowledge$\{r'_{ji}=\alpha_{ji}+r_j\}_j$ only, i.e., the exponent of received signal power from Tx-$j$. From \eqref{eq:weighted-rate}, we formulate a linear program taking into account the local knowledge 
\begin{subequations}
\begin{align}
\max_{r_i}  \quad & \sum_{i \in \Sc} w_i (\alpha_{ii} + r_i - \max_{j \neq i} \{0, r'_{ji}\} )\\
{\rm s.t.} \quad & r'_{ji} = \alpha_{ji} + r_j, \forall j \in \Sc \backslash \{i\}, \; \forall i\\
 & r_j \le 0,  \forall j \in \Sc 
\end{align}
\end{subequations}

This can be solved by {alternating optimization}. 
First, given $\{\gamma_{ji}\}_j$, we obtain the optimal solution of local variables $r_i$ and $\{r'_{ji}, j \in \Sc \backslash \{i\}\}$. Then, given the updated local variables $r_i$ and $\{r'_{ji}, j \in \Sc \backslash \{i\}\}$, we update $\{\gamma_{ji}\}_j$ again. Keep doing this until it converges.

Introducing Lagrange multipliers only for the coupling constraint, we form a partial Lagrangian
\begin{align}
L=-\sum_{i \in \Sc} w_i (\alpha_{ii} + r_i - \max_{j \neq i} \{0, r'_{ji}\} ) + \sum_{i \in \Sc} \sum_{j \in \Sc \backslash \{i\}} \gamma_{ji} (r'_{ji} - \alpha_{ji} - r_j)
\end{align}
and thus, each user only has to take care of its local partial Lagrangian term, given by
\begin{align}
 L_i (r_i, \{r'_{ji}\}_{j \neq i}; \{\gamma_{ji}\}) &=-w_i (\alpha_{ii} + r_i - \max_{j \neq i} \{0, r'_{ji}\} ) +  \sum_{j \in \Sc \backslash \{i\}} \gamma_{ji} r'_{ji} - \left(\sum_{j \in \Sc \backslash \{i\}} \gamma_{ij} \right) r_i.
\end{align}

The minimization of partial Lagrangian can be done locally by each user in parallel with respect to the primal local variables $r_i$ and $\{r'_{ji}, j \in \Sc \backslash \{i\}\}$ given the knowledge of $\{\gamma_{ji}\}_j$.

The dual variable $\{\gamma_{ji}\}_j$ can be obtained by solving the dual problem
\begin{align}
\max_{\{\gamma_{ji}\}_j}  g(\{\gamma_{ji}\}_j) \defeq \sum_{i}  \min_{r_i, \{r'_{ji}\}_{j \neq i}} L_i (r_i, \{r'_{ji}\}_{j \neq i}; \{\gamma_{ji}\}).
\end{align}
A simple solution of $\{\gamma_{ji}\}_j$ is to update iteratively with the updating rule in $t$ iteration being:
\begin{align}
\gamma_{ji}(t+1) = \gamma_{ji}(t) + \delta(t) (r'_{ji}(t) - r^e_{ji}(t))
\end{align}
where $r^e_{ji}(t)=\alpha_{ji} + r_j(t)$ is the estimation of received signal power exponent, and $\delta(t)$ is a carefully chosen stepsize. Signaling of $\{\gamma_{ji}\}_j$ is needed in each iteration, and the reduction of such signaling overhead can be similarly done as in \cite{GP}.

\subsection{Decentralized Auction Algorithm for Power Allocation}
\label{sec:auction}
The Auction algorithm is an iterative procedure to determine the optimal assignment of a number of products to a number of potential buyers fulfilling their own best interests. It mimicks the sales auction in the business activities in which bids are compared in multiple rounds to make the best offer to the products, with each product going to the highest bidder. The Auction algorithm is an efficient way to solve the assignment problem in a distributed manner. It has many variants, and the algorithm originally proposed by Demange, Gale, and Sotomayor \cite{DGS-Auction} (referred to as ``DGS Auction'') is one of them. Interestingly, DGS Auction algorithm adopts an ascending pricing strategy and converges to the minimum price equilibrium \cite{DGS-Auction}.

In our setting, the transmitters are bidders, and the receivers represent products. Let us look at the assignment problem from an auction perspective. The bidders have access to the local channel strength knowledge, i.e., bidder $i$ only knows $\{\Am_{ij}, j \in \Kc\}$. Here the left label $y_{v_j}$ can be regarded as the price of the product $j$, meaning that a bidder must pay as much as $y_{v_j}$ to obtain the product $j$. For a given price $y_{v_j}$, $\Am_{ij}-y_{v_j}$ can be regarded as the benefit of the bidder $i$ regarding the project $j$. We define profit margin by $y_{u_i} = \max_j \{\Am_{ij}-y_{v_j} \} $. The objective is to determine the best assignment given this local information, such that each bidder is happy to be assigned to a product with the lowest price $y_{v}$ and in turn highest profit margin $y_{u}$. Specifically, it is to minimize the price while maximizing the profit margin, satisfying $y_{u_i} +y_{v_j} \ge \Am_{ij}, \forall i,j$.

An algorithm inspired by DGS Auction is detailed in Algorithm \ref{alg:Auction} where $\epsilon$ is a design parameter.
If $\epsilon$ is small, it requires more rounds of iteration to achieve a reasonably ``almost optimal'' solution. While $\epsilon$ is large, as in real auctions, the bidder may take risks to pay a non-necessarily high price, leading to a suboptimal solution with a faster convergence. 
A demand set $\Dc$ is maintained among bidders to indicate which bidders are not unassigned any product. $O_j$ represents the owner of the product $j$ who successively bids this product.
In each round, the bidder who is not assigned any product will bid his most profitable product, i.e., $j^*= \arg \max_{j} \{\Am_{ij} - y_{v_j} \}$ with $j^*$ being the most profitable product of the bidder $i$. If the associated profit $\Am_{ij^*} - y_{v_{j^*}}$ is negative or $j^*$ is already assigned to the bidder $i$, then we skip this bidder and consider the next unassigned bidder. Otherwise, bidding process starts. If the product $j^*$ was already assigned to another bidder, then this bidder will be added to the demand set and reconsidered later. If this product $j^*$ is free, then it will be assigned to the bidder $i$, and at the same time the price of the product $j^*$ will be raised by $\epsilon$. Keep doing this until every bidder has his product without competitors. The final assignment of bidders and products is the optimal solution to the assignment problem. 

\begin{algorithm}
\caption{A Decentralized Power Allocation Algorithm via Auction Algorithm}
\label{alg:Auction}
\begin{algorithmic}[1]
\Require The bidder $i$ only has the local knowledge $\Am_i = \Bmatrix{\alpha_{i1} & \dots & \alpha_{ii}-d_i & \dots & \alpha_{iK}}$.
\State Initialization: Set $y_{v_j}=0$, $O_j = 0, \forall j$, and $\Dc=\{1,2,\dots,K\}$.
\While {$\Dc \neq \emptyset$} 
\State Choose a bidder $i$ from the demand set, i.e., $\Dc \gets \Dc \backslash \{i\}$
\State For bidder $i$, find the best values in $\{\Am_{ij} - y_{v_j}, \forall j\}$
$$ w_{i} = \max_{j} \{\Am_{ij} - y_{v_j} \}, \quad j^*= \arg \max_{j} \{\Am_{ij} - y_{v_j} \} $$
\If {$w_i \ge \epsilon \; \& \& \; O_{j^*} \ne i$}
\If {$O_{j^*} \neq 0$}
\State Add $i$ into the demand set, i.e., $\Dc \gets \Dc \cup \{O_{j^*}\}$
\EndIf
\State Assign the bidder $i$ with the product $j^*$, i.e., $O_{j^*}=i$
\State Product $j^*$ raises the price by $\epsilon$, i.e., $y_{v_{j^*}} \gets y_{v_{j^*}} + \epsilon$
\EndIf
\EndWhile
\end{algorithmic}
\end{algorithm}

For any set of feasible power allocation parameters $\{r_j=-y_{u_j}\}_j$, there exists an equilibrium price vector $\yv_v=\{y_{v_j}\}_j$. The minimum price equilibrium always exists as long as the corresponding GDoF tuple is feasible. It has been proved in \cite{DGS-Auction} that this DGS Auction algorithm leads to the minimum price equilibrium $\yv_v^*$, meaning that any other feasible price vector $\yv_v$ is element-wise larger than this minimum equilibrium, i.e., $\yv_v^* \le \yv_v$. As such, the globally minimal power allocation parameters can be obtained from $\{r_{i^*}= y_{v_{j^*}} - \Am_{i^* j^*}\}$ where $(i^*,j^*)$ belongs to the optimal bidder-product assignment.

If we make $\epsilon$ sufficiently small such that the price is raised carefully in each round of auction, $\{y_{v_j}\}_j$ obtained by Algorithm \ref{alg:Auction} can achieve arbitrarily close to the minimum equilibrium price.
The algorithm converges within a finite number of iterations. An adaptive price increasing strategy \cite{DGS-Auction1} can be applied to speed up the convergence.

\section{Numerical Results}
\label{sec:simulations}
To demonstrate the gains of our ITLinQ+ mechanism over FlashLinQ and ITLinQ with regard to the sum throughput and energy efficiency, we perform numerical analysis under a similar network setup as in \cite{FlashLinQ,ITLINQ}. 

As the first setup, we only consider the link scheduling without power control to show the benefit of our new decision criterion. We consider a 1 km $\times$ 1 km square area and randomly drop $n$ transmitter-receiver pairs. As in \cite{FlashLinQ,ITLINQ}, the simulated channel follows the LoS model in ITU-1411, and the system parameters are listed in Table \ref{tab:param}. Two scenarios are considered: (1) Scenario 1: The distance of any two paired transmitter and receiver is uniformly distributed in $[5,30]$m, all links are supposed to operate over a 5MHz spectrum, and the maximum transmit power is 20 dBm; (2) Scenario 2: The range of distance is enlarged to $[10,60]$m with a larger maximum transmit power 30 dBm and a wider spectrum bandwidth of 10MHz. We compare ITLinQ+ with no scheduling case where all links are activated, FlashLinQ, as well as ITLinQ with properly chosen parameters $\eta=0.7$ and $M=25$ dB as in \cite{ITLINQ}. In both scenarios, we use the same parameter $\eta=0.9$ and $\gamma=0.1$ for our ITLinQ+.

\begin{table}
\center
\caption{System Parameters}
  \begin{tabular}{ c | c }
    \hline \hline
    Parameters & Values\\ \hline
    Cell range & 1km $\times$ 1km \\ \hline
    Carrier Frequency & 2.4GHz \\ \hline
    Bandwidth & 5MHz and 10 MHz \\ \hline
    Distance (uniformly distributed) & $[5,30]$ and $[10,60]$ m \\ \hline
    Transmit Power & 20 and 30 dBm \\ \hline
    Noise power spectral density & -174 dBm/Hz \\ \hline
    Antenna Height & 1.5 m \\ \hline
    Antenna Gain per Device & -2.5 dB \\ \hline
    Noise Figure & 7 dB \\ \hline \hline
  \end{tabular}
  \label{tab:param}
  \end{table}

Fig. \ref{fig:throughput} shows the sum throughput (bit/sec/Hz) versus the total number of links in the D2D networks ranging from 8 to 1024.
It demonstrates  the significant improvement of ITLinQ+ over FlashLinQ and ITLinQ. For instance, with in total 1024 links, ITLinQ+ achieves 40\% gain in Scenario 1, 60\% gain in Scenario 2 over FlashLinQ, and 20\% gain in Scenario 1 and 40\% gain in Scenario 2 over ITLinQ. \footnote{The gap between ITLinQ and ITLinQ+ might be reduced if one can further fine-tune the parameter $M$ in ITLinQ. Unfortunately, there was no guideline on how to choose this parameter in \cite{ITLINQ}.} As a side remark, the design parameters $\eta, \gamma$ of our ITLinQ+ keep unchanged for various scenarios, which meets the demand of robustness in D2D networks.

\begin{figure}[htb]
 \centering
\includegraphics[width=0.49\columnwidth]{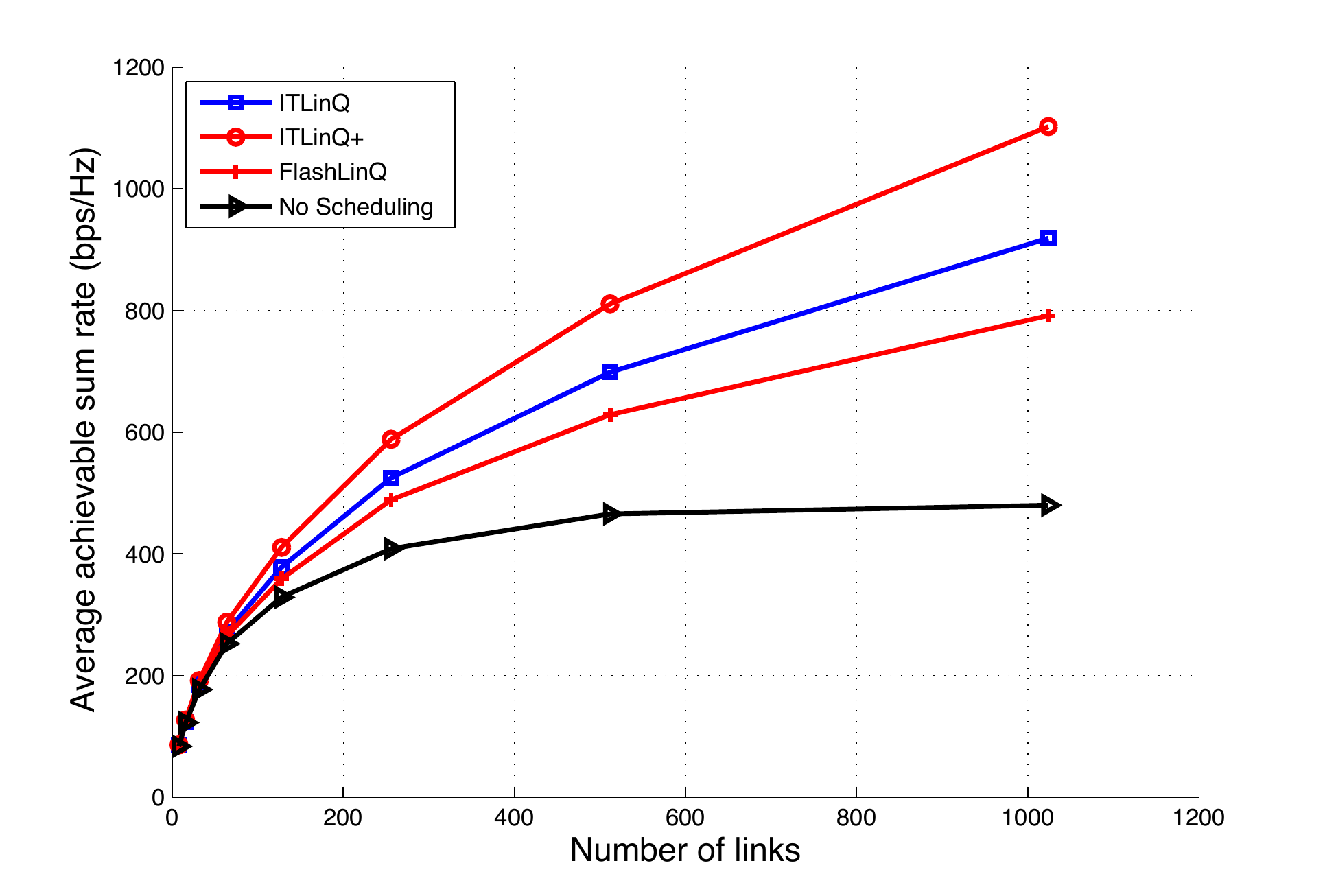}
\includegraphics[width=0.49\columnwidth]{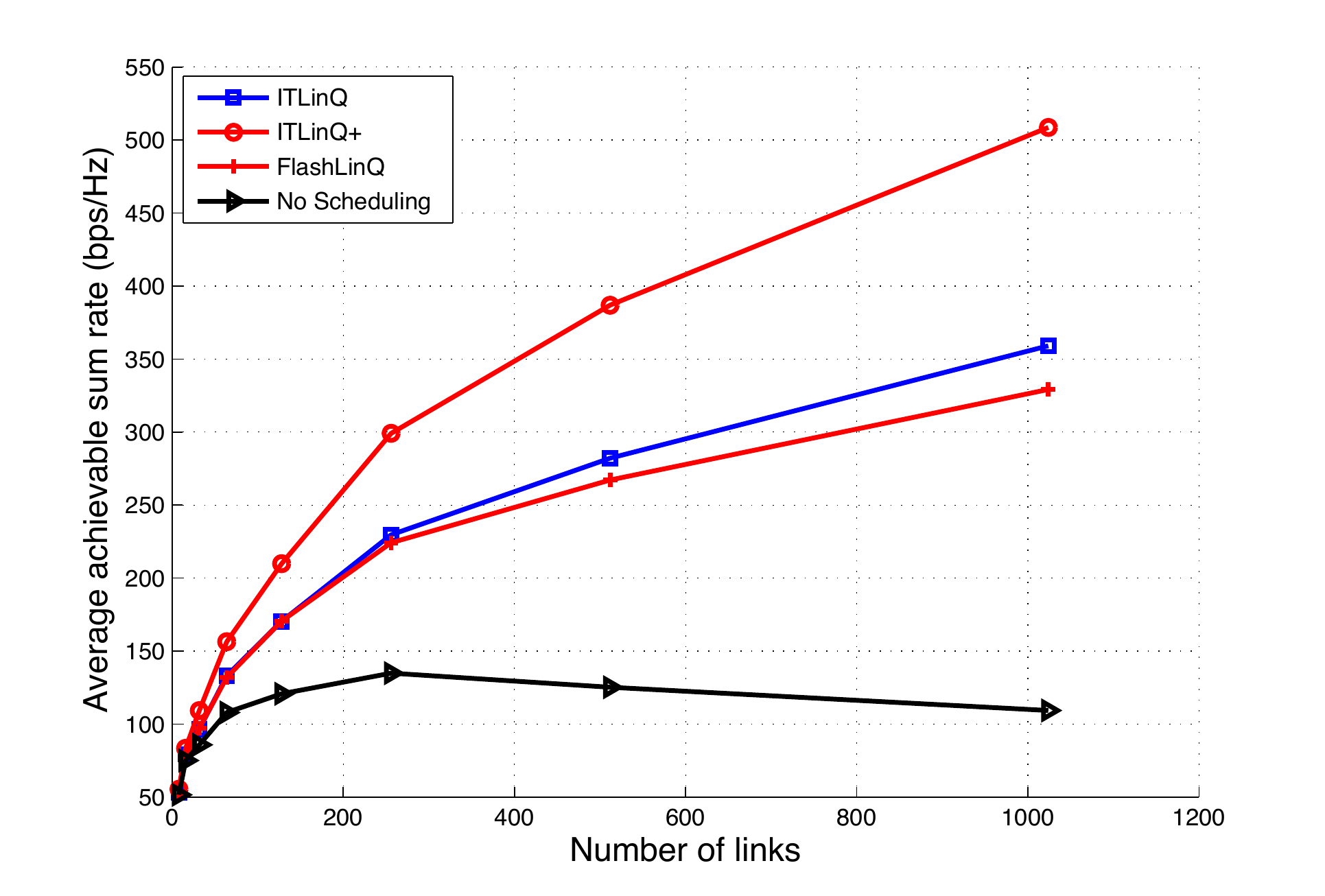}
\caption{Sum throughput comparison among no scheduling, FlashLinQ, ITLinQ and our new ITLinQ+ without power control. The left is for Scenario 1 and the right for Scenario 2.}
\label{fig:throughput}
\end{figure}

To understand better the dramatic energy efficiency improvement of our power control algorithms, we consider a smaller random network with 10 link pairs where the desired and cross link strength levels are uniformly distributed in [1,2] and [0,1], respectively. We compare in Fig. \ref{fig:vssnr} the sum throughput and energy efficiency (bits per Joule) versus SNR.
\begin{figure}[htb]
 \centering
\includegraphics[width=0.45\columnwidth]{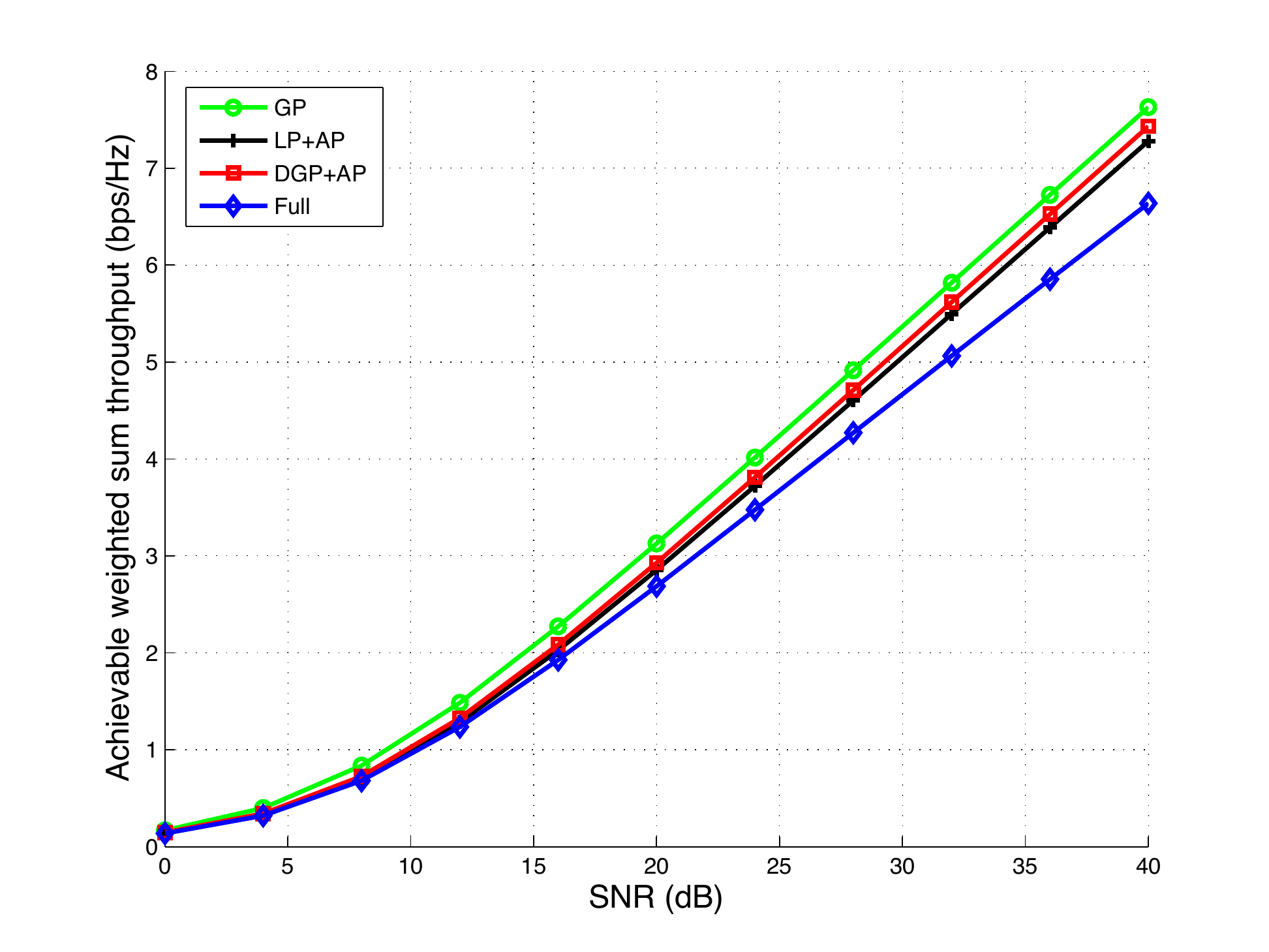}
\includegraphics[width=0.5\columnwidth]{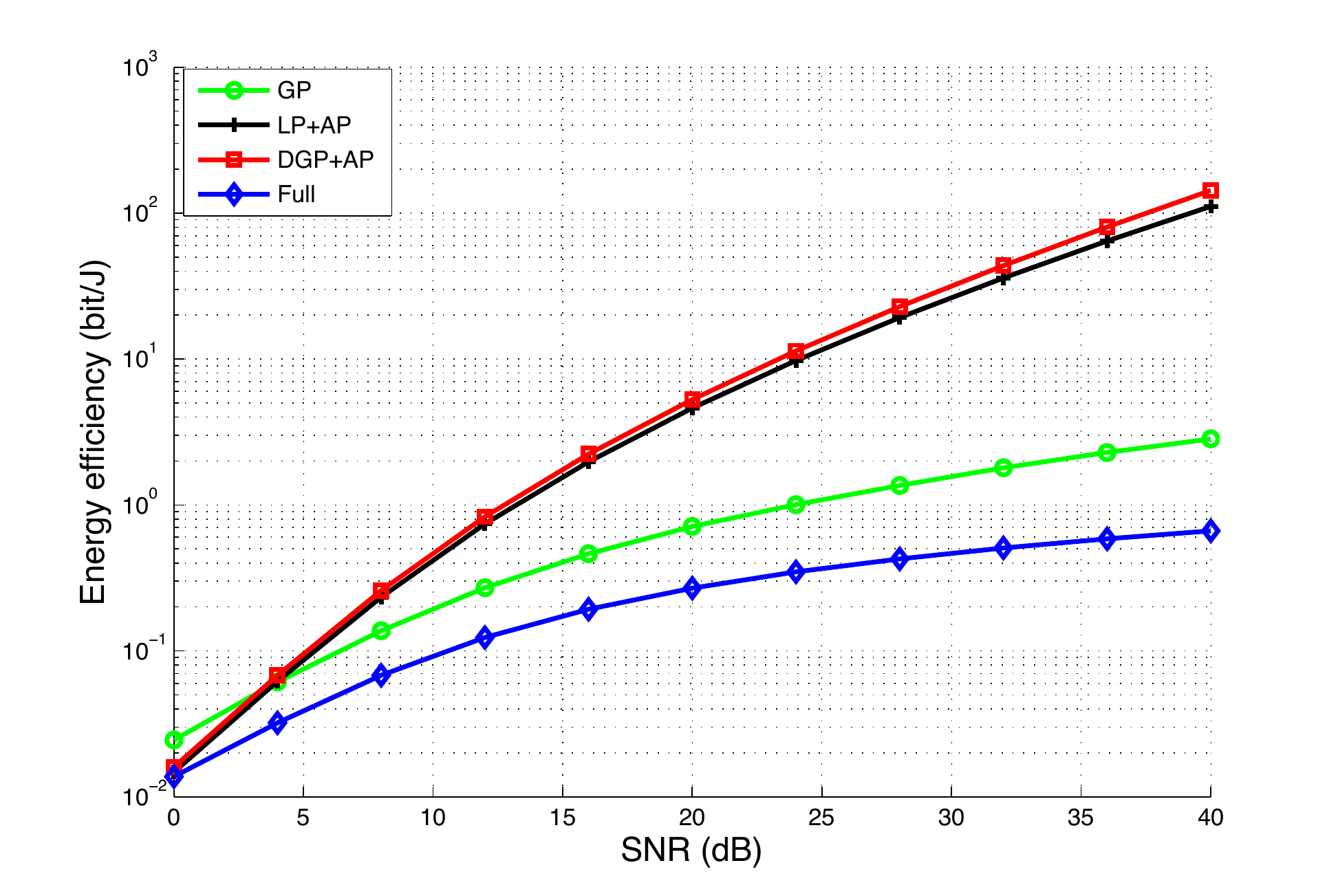}
\caption{Sum throughput and energy efficiency versus SNR.}
\label{fig:vssnr}
\end{figure}

Four schemes are compared: (1) Conventional GP-based power allocation (``GP''), (2) distributed power allocation via DGS-Auction algorithm where the GDoF tuple is found by LP (``LP+AP''), (3) distributed power allocation via DGS-Auction algorithm where the GDoF tuple is found by the distributed implementation of GP (``DGP+AP''), and (4) full power allocation (``full'') for reference. It is clear that the conventional GP-based power control offers the maximal sum throughput, and the assignment-inspired power control with the GDoF tuple obtained by either GP or LP gives relatively good performance. Notably, the energy-efficiency of the assignment-inspired power control nearly 50 times outperforms that by GP, and 100 times over full power allocation.

\section{Conclusion}
\label{sec:conclusion}
The GDoF optimality problem of treating interference as noise for Gaussian interference channels has been re-formulated in a combinatorial 
optimization perspective. Thanks to this new formulation, we cast power allocation into an assignment problem, which can be solved in polynomial time. 
A new expression for the TIN-Achievable GDoF region is provided, which is more compact and useful than what known before since it eliminates
many redundant inequalities.   A relaxed version of the condition in \cite{TIN} on the channel coefficients is given, for which the TIN-Achievable GDoF region is a convex polytope. 
Finally, a new TIN optimality condition is also revealed, by which TIN still achieves the optimal GDoF region for a class of networks different from the one identified in \cite{TIN}. It is also worth noting that our new TIN optimality condition does not violate the conjecture in \cite{TIN} that the GNAJ condition 
is also necessary ``except for a set of channel gain values with measure zero.''

We are also able to translate these insights into practical communication systems (e.g., D2D networks). With the newly found channel strength condition, we employed it as a new decision criterion in a distributed link scheduling mechanism. Together with the globally optimal distributed power allocation algorithms, we proposed a distributed spectrum sharing mechanism (ITLinQ+) for D2D networks. By simulation, we have shown that our ITLinQ+ mechanism achieves significant sum throughput improvement over FlashLinQ and ITLinQ with the same level implementation complexity. Moreover, ITLinQ+ also promises a substantial improvement on energy efficiency.

\section*{Appendix}
\subsection{Preliminaries}
\label{sec:pre}
\subsubsection{Weighted Matching}
In this work we shall make extensive use of weighted matchings \cite{matching} of bipartite graphs. 
We recall here some basic definitions.
Let $\Gc=(\Uc,\Vc,\Ec)$ denote a bipartite graph with left vertices $\Uc$, right vertices $\Vc$ and edges $\Ec \subseteq \Uc \times \Vc$. 
A matching $\Mc \subseteq \Ec$ is a set of edges, any two of which do not share the same vertex.
When weights $w(u,v)$ are associated to the edges $(u,v) \in \Ec$, we denote by $w(\Mc) = \sum_{(u,v) \in \Mc} w(u,v)$ the weight of the matching $\Mc$, and
we let $\Mc^* = \arg_{\Mc} \max w(\Mc)$ denote the maximum weighted matching, i.e., the matching with maximum sum-weight.  
$\Mc^*$ can be characterized as the solution of the integer program: 
\begin{align}
\max \quad & \sum_{(u,v) \in \Ec} w(u,v) x(u,v),  \label{frac-matching1} \\
{\rm s.t.} \quad & \sum_{u \in \Uc :  (u,v) \in \Ec} x(u,v) \le 1,  \label{frac-matching2} \\
               & \sum_{v \in \Vc :  (u,v) \in \Ec} x(u,v) \le 1,  \label{frac-matching3} \\
               & x(u,v) \in \{0,1\}.  \label{frac-matching4}
\end{align}
When equality holds in all constraints,  the resulting solution is called a {\em perfect matching}, i.e., a matching that covers all vertices.
The LP relaxation of (\ref{frac-matching1})-(\ref{frac-matching4}), obtained by  replacing (\ref{frac-matching4}) with
$x(u,v) \in [0,1]$, is called fractional matching \cite{Fractional2011}. For bipartite graphs, the solution of this LP relaxation is integral, i.e., $x \in \{0,1\}$, meaning that, given a fractional matching, there exists a perfect matching such that the sum-weights of two matchings are equal. In other words, there always exists an integral solution to the LP relaxation problem.

A vertex/edge is called matched if it is involved in a matching; otherwise it is a free vertex/edge. A path is alternating if its edges alternate between matched and free edges. The augment operation ${\rm aug(\cdot)}$ is to exchange matched and free edges in an alternating path that starts from and ends to free vertices. For instance, given an alternating path $\Pc=\{(i_0,i_1),(i_1,i_2),(i_2,i_3),\dots,(i_{2n},i_{2n+1})\}$ consists of a matching  $\Mc=\{(i_1,i_2), (i_3,i_4),\dots,(i_{2n-1},i_{2n})\}$ and free edges $\Pc \backslash \Mc$, the augment of $\Pc$ results in a new matching $\Mc'={\rm aug} (\Pc) \defeq \Pc \backslash \Mc$ and free edges $\Mc$.

\subsubsection{Geometric Programming}
Geometric programing is a powerful tool to solve a class of nonlinear optimization problems under a standard form
\begin{subequations}
\begin{align}
\min \quad & f_0(\xv)\\
{\rm s.t.} \quad & f_i(\xv) \le 1, \quad i=1,\dots,m\\
& g_i(\xv)=1, \quad i=1,\dots,p
\end{align}
\end{subequations}
where $\{f_i(\xv),i=0,1,\dots,m\}$ are posynomial functions $f_i(\xv): \mathbb{R}^n \mapsto \mathbb{R}$ in a form of
\begin{align}
f(x_1,\dots,x_n) = \sum_{k=1}^K c_k x_1^{a_{1k}} x_2^{a_{2k}} \cdots x_n^{a_{nk}}
\end{align}
with $c_k \ge 0, \forall k$ and $\{g_i(\xv),i=1,\dots,p\}$ are monomial functions $g_i(\xv): \mathbb{R}^n \mapsto \mathbb{R}$ in the form of
\begin{align}
g(x_1,\dots,x_n) = c x_1^{a_{1}} x_2^{a_{2}} \cdots x_n^{a_{n}}
\end{align}
with $c \ge 0$.

{ 
\subsection{The Kuhn-Munkres Algorithm}
\label{sec:kuhn}
To ease the presentation, we construct a bipartite graph $\Gc=(\Uc,\Vc)$ with weight of edge $(i,j)$ being $\Am_{ij}$ and $\Uc,\Vc$ being transmitter and receiver sets respectively. The Kuhn-Munkres algorithm is to find the maximum weighted matching in this bipartite graph. The input is the weight matrix $\Am$ defined in \eqref{eq:inputmatrix}, and the output is the matching with maximum sum weights and the corresponding left and right labels $\{y_{u_j}, y_{v_j}\}_j$, in which the left labels $\{y_{u_j}\}_j$ achieve the maximum left label equilibrium \cite{Hungarian1955,munkres1957}.

As the initialization, we choose a feasible labeling with $y_{u_i}=\max_j {\Am_{ij}}, \forall~i$, and $y_{v_j}=0, \forall~j$. This labeling is feasible, because $y_{u_i} + y_{v_j} \ge \Am_{ij}$ always holds for any pair of $i \in \Uc$ and $j \in \Vc$. We also construct an equality subgraph, $\Gc_E$, including all the vertices of $\Gc$ but only those edges $(i,j)$ such that  
\begin{align}
y_{u_i} + y_{v_j} = \Am_{ij}, \forall i \in \Uc,j \in \Vc.
\end{align}
It has been proved in \cite{Hungarian1955,munkres1957} that, once $\Gc_E$ has a perfect matching $\Mc$, then this matching $\Mc$ is a maximum weighted matching, and thus the corresponding labels are the final solution to the assignment problem.

The algorithm consists of multiple rounds of iterations. 
In each round, we first check if there exists a perfect matching in $\Gc_E$. Note that for a feasible GDoF tuple, the maximum matching always involves edges $(j,j)$ according to \eqref{scummily}. Thus, the perfect matching consists of edges $\{(j,j), \forall j \in \Kc\}$.
If not, the left and right labels are carefully decreased and increased respectively, and the equality subgraph $\Gc_E$ is updated accordingly. Once $\Gc_E$ contains the perfect matching $\{(j,j), \forall j \in \Kc\}$, the resulting left labels $\{y_{u_j}\}_j$ achieve the maximum left label equilibrium, yielding the global minimal power allocation $r_j = -y_{u_j}$ for all $j$. The details of the Kuhn-Munkres algorithm are given in Algorithm~\ref{alg:Hung}, where $\Nc_L(\Sc)$ is the neighborhood of a set of nodes of $\Sc$ in $\Gc_E$, i.e., $\Nc_L(\Sc)=\{j: (i,j) \in \Gc_E, \forall i \in \Sc, j \in \Vc\}$. An illustrative example is also given as follows.
\begin{algorithm}
\caption{A Centralized Power Allocation Algorithm via the Hungarian Method}
\label{alg:Hung}
\begin{algorithmic}[1]
\Require Matrix $\Am$ with $ij$-th element specified in \eqref{eq:inputmatrix}.
\State Initialization: Set $y_{u_i}=\max_j {\Am_{ij}}, \forall~i$ and $y_{v_j}=0, \forall~j$. Construct $\Gc_E$ according to $\{y_{u_i},y_{v_j}, \forall i,j\}$ and choose an arbitrary matching $\Mc$ contained in $\Gc_E$.
 Let $\Sc=\Tc=\emptyset$.
\If {$\exists~\Mc = \{(j,j), \forall j \in \Kc\}$ in $\Gc_E$} 
\State $r_i = -y_{u_i}, \forall~i$, and \Return
\Else  
\State Pick a free vertex $u \in \Uc$
\State $\Sc \gets \{u\}$, $\Tc \gets \emptyset$.
\EndIf

\If {$\Nc_L(\Sc)=\Tc$}
    $\alpha_L = \min_{i \in \Sc, j \notin \Tc} \{ y_{u_i} + y_{v_j} - \Am_{ij}\}, $
\State   Update $y_{u_k} \gets y_{u_k} - \alpha_L$, if $k \in \Sc$, 
\State   Update $y_{v_k} \gets y_{v_k}+\alpha_L$, if $k \in \Tc$,
\State   Update $\Mc=\{(i,j): y_{u_i} + y_{v_j} = \Am_{ij}\}$
\Else
  \State Pick $v \in \Nc_L(\Sc) \backslash \Tc$
   \If {$v$ is a free vertex} 
\State   Augment the alternating path $u \to v$ that contains the matching $\Mc$
\State Update $\Mc \gets {\rm aug}(\{u \to v\})$ and goto 2
   \Else \If {$v$ is matched to $u'$} 
\State   $\Sc \gets \Sc \cup \{u'\}, \Tc \gets \Tc \cup \{v\}$
\State goto 8
   \EndIf
   \EndIf
\EndIf
\end{algorithmic}
\end{algorithm}


\begin{example}
\label{ex:Kuhn}
\normalfont
The detailed power allocation procedure according to Algorithm \ref{alg:Hung} is presented as follows.

\begin{figure}[htb]
\centering
\includegraphics[width=0.7\columnwidth]{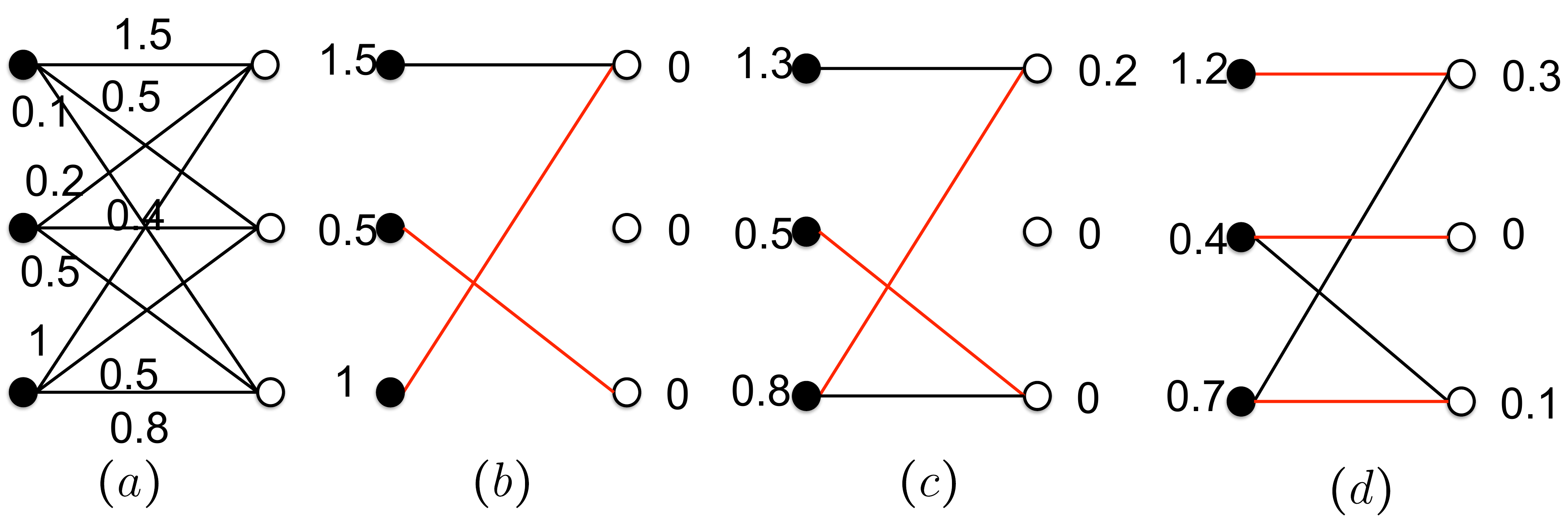}
\caption{ Illustration of the Hungarian method. (a) The weighted bipartite graph with weight of $ij$-th edge being $\Am_{ij}$. 
 (b) The equality subgraph with initiated labeling. (c) The updated equality subgraph with first round label update. (d) The update equality subgraph with second round label update. The edges in red give a matching in equality subgraphs. }
\label{fig:hungarian}
\end{figure}

As an initialization, we assign
\begin{align}
y_{u_1}=1.5, \; y_{u_2}=0.5, \; y_{u_3}=1, \; y_{v_1}=y_{v_2}=y_{v_3}=0
\end{align}
such that we construct the equality subgraph $\Gc_E$ with edges $\{(1,1),(2,3),(3,1)\}$ as in Fig. \ref{fig:hungarian}(b). Note that $\Gc_E$ does not contain a perfect matching. So we choose an arbitrary matching, e.g., $\Mc=\{(2,3),(3,1)\}$, as shown in Fig. \ref{fig:hungarian}(b).

In the first round, we choose a free vertex $u_1$ in $\Gc_E$ and set $\Sc=\{u_1\}$ and $\Tc=\emptyset$.
Because $\Nc_L(\Sc)=\{v_1\} \neq \Tc$, we go to line-12 and pick $v_1 \in \Nc_L(\Sc) \backslash \Tc$.
Note that $v_1$ is matched to $u_3$, and thus we update $\Sc=\{u_1,u_3\}$, $\Tc=\{v_1\}$.
As $\Nc_L(\Sc) = \{v_1\}=\Tc$, we go to line-8 and obtain
\begin{align}
\alpha_L = \min_{u_i \in \{u_1,u_3\}, v_j \in \{v_2,v_3\}} \{y_{u_i} + y_{v_j} - \Am_{ij}\} = 0.2
\end{align}
As such, we have
\begin{align}
y_{u_1}=1.3, \; y_{u_2}=0.5, \; y_{u_3}=0.8, \; y_{v_1}=0.2, \; y_{v_2}=0, \; y_{v_3}=0.
\end{align}
and go to line-2.

In the second round, we update the equality subgraph $\Gc_E$ with edges $\{(1,1),(2,3),(3,1),(3,3)\}$ and there still does not contain a perfect matching, as shown in Fig. \ref{fig:hungarian}(c). Thus, we choose $\Mc=\{(2,3),(3,1)\}$ as a matching. Still, we pick the free vertex $u_1$, and set $\Sc=\{u_1\}$ and $\Tc=\emptyset$. Now, as $\Nc_L(\Sc)=\{v_1,v_3\} \neq \Tc$, we pick $v_1 \in \Nc_L(\Sc) \backslash \Tc$. Because $v_1$ is matched to $u_3$, we update $\Sc=\{u_1,u_3\}$, $\Tc=\{v_1\}$. At this point, $\Nc_L(\Sc) = \{v_1,v_3\} \ne \Tc$ again. Thus, we pick $v_3 \in \Nc_L(\Sc) \backslash \Tc$. Due to $v_3$ is matched to $u_2$, we update $\Sc=\{u_1,u_2,u_3\}$ and $\Tc=\{v_1,v_3\}$. Here $\Nc_L(\Sc) = \{v_1,v_3\}=\Tc$, we go to line-8 and have
\begin{align}
\alpha_L = \min_{u_i \in \{u_1,u_2,u_3\}, v_j \in \{v_2\}} \{y_{u_i} + y_{v_j} - \Am_{ij}\} = 0.1
\end{align}
As such, the labels are updated as
\begin{align}
y_{u_1}=1.2, \; y_{u_2}=0.4, \; y_{u_3}=0.7, \; y_{v_1}=0.3, \; y_{v_2}=0, \; y_{v_3}=0.1.
\end{align}
and then we go to line-2.

Till now, in the updated equality subgraph shown in Fig. \ref{fig:hungarian}(d), we have a perfect matching $\{(1,1),(2,2),(3,3)\}$. Thus, the algorithm returns with
\begin{align}
r_1=-1.2, \; r_2=-0.4, \; r_3=-0.7.
\end{align}
\end{example}
}

\subsection{Proof of Theorem~\ref{theorem:tina-represent}}
\label{proof:tina-represent}
For the sake of this proof, we denote by $\Pc_\Sc$ the region defined by (\ref{TINA-Jafar}), and by $\Pc^{\rm TINA}_\Sc$ the region defined
by (\ref{ziofa}). Our goal is to show that $\Pc_\Sc = \Pc^{\rm TINA}_\Sc$ for any $\Sc \subseteq \Kc$. 

\subsubsection*{\underline{$\Pc_{\Sc} \subseteq \Pc^{\rm TINA}_{\Sc}$}}
To prove this, we show that for any inequality presented in $\Pc^{\rm TINA}_{\Sc}$, we can always find the same one in $\Pc_{\Sc}$.

Given a subnetwork $\Gc[\Sc]$, the matching with the maximum weight is a degree-1 subgraph. \footnote{If direct links are in the matching, we can eliminate them from $\Sc$, which does not affect our proof.} Connecting the direct links will lead to single or multiple disjoint cycles, as all the nodes has degree-2. 

For the single-cycle case, this cycle corresponds to a sum GDoF constraint in $\Pc_{\Sc}$.
For the multiple-cycle case, each cycle corresponds to a sum GDoF constraint in $\Pc_{\Sc}$ of the users involved in this cycle. Thus, the sum GDoF constraint with the maximum weighted matching in $\Pc^{\rm TINA}_{\Sc}$ corresponds to the combination of these sum GDoF constraints in $\Pc_{\Sc}$.

As such, $\Pc_{\Sc}$ contains or implies all the constraints in $\Pc^{\rm TINA}_{\Sc}$. As $\Pc_{\Sc}$ has more constraints, it follows that $\Pc_{\Sc} \subseteq \Pc^{\rm TINA}_{\Sc}$.

\subsubsection*{\underline{$\Pc^{\rm TINA}_{\Sc} \subseteq \Pc_{\Sc}$}}
To prove this, we show that, for any subset of users $\Sc$, the TINA GDoF region confined by $\Pc_{\Sc}$ is no larger than that by $\Pc^{\rm TINA}_{\Sc}$. It is clear that $\Pc_{\Sc}$ is determined by individual GDoF and sum GDoF constraints of any subset of users in $\Sc$. The individual GDoF constraints of two regions are identical. Thus, our focus will be on the sum GDoF constraints for users in $\Sc$ with $\abs{\Sc} \ge 2$.

For the user set $\Sc$, the sum GDoF constraints in $\Pc_{\Sc}$ only come from (1) the sum GDoF constraints with all possible permutations of $\Sc$, and (2) the combination of a number of individual and/or sum GDoF constraints of subsets of $\Sc$.
For the first case, the sum GDoF constraint in $\Pc_{\Sc}$ is dominated by the maximum weight of any possible matchings (associated with cyclic sequences). 
For the second case, suppose the combination involves a number of subnetworks, where the subnetworks may have intersections. This combination of constraints involves every user with equal times (say $b$ times), otherwise, the combination will not lead to a sum GDoF constraint, because it is a weighted sum GDoF constraint and can be implied by the combination of other sum GDoF constraints. 
Each sum GDoF constraint for a subnetwork involves a cyclic sequence and hence forms a matching. Thus, the combination of sum GDoF constraints corresponds to a fractional perfect matching by assigning $x(u,v)$ in \eqref{frac-matching1}-\eqref{frac-matching4} with $\frac{1}{b}$. In bipartite graphs, the weight of any fractional perfect matching equals the weight of a perfect matching \cite{matching,Fractional2011}.

Thus, neither the weight of any matching nor of any fractional matching is greater than the maximum weighted matching, such that the sum GDoF constraints in $\Pc^{\rm TINA}_{\Sc}$ will be more restrictive than those or any combinations in $\Pc_{\Sc}$, i.e., $\Pc^{\rm TINA}_{\Sc} \subseteq \Pc_{\Sc}$. This completes the proof.

\subsection{Proof of Theorem~\ref{lemma:sim_region}}
\label{proof:sim_region}
In what follows, we prove that under condition (\ref{C1}), 
$\Pc^{\rm TINA}_{\Sc}$ is monotonically increasing. 
Hence, from (\ref{union-tina}) this immediately implies that $\Rc^{\rm TINA} = \Pc^{\rm TINA}_{\Kc}$ which, 
by inspection, is a convex polytope. 

Let us start with $\abs{\Sc}=2$. Suppose without loss of generality $\Sc=\{k,j\}$.
Due to the condition (\ref{C1}), $\min\{\alpha_{kk}, \alpha_{jj}\} \ge \alpha_{kj} + \alpha_{jk}$, then it is easy to verify that $\Pc^{\rm TINA}_k \subseteq \Pc^{\rm TINA}_{\{k,j\}}$ and $\Pc^{\rm TINA}_j \subseteq \Pc^{\rm TINA}_{\{k,j\}}$. 

Then, we prove the general cases with the following lemma.
\begin{lemma} \label{lemma:diff}
Given a subgraph $\Gc[\Sc]$ with weights $\{\alpha'_{ij},i,j \in \Sc\}$, the difference of maximum weighted matching with and without the user $k$ is bounded by
\begin{align*}
w (\Mc^*_\Sc) - w(\Mc^*_{\Sc \backslash \{k\}}) \le \max_{i,j \in \Sc, i,j \ne k} \{\alpha_{ik} + \alpha_{kj} - \alpha'_{ij}\}
\end{align*}
\end{lemma}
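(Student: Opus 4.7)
The plan is to upper-bound $w(\Mc^*_\Sc) - w(\Mc^*_{\Sc \backslash \{k\}})$ by exhibiting, for each configuration of how $u_k$ and $v_k$ are matched in $\Mc^*_\Sc$, an explicit (not necessarily optimal) matching $\Mc'$ on $\Gc[\Sc \backslash \{k\}]$ obtained by local surgery on $\Mc^*_\Sc$, and then invoking $w(\Mc^*_{\Sc \backslash \{k\}}) \ge w(\Mc')$.

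First I would argue that, since all weights $\alpha'_{ij}$ are nonnegative and $\Gc[\Sc]$ is complete bipartite, $\Mc^*_\Sc$ may be taken to be a perfect matching without loss of generality (any unmatched left--right pair can be closed up with a nonnegative-weight edge). Let $(u_k,v_j)\in\Mc^*_\Sc$ and $(u_i,v_k)\in\Mc^*_\Sc$ be the two edges incident to $u_k$ and $v_k$ in this perfect matching. I would then split on the relation between $i,j,k$.

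If $(u_k,v_k)\in\Mc^*_\Sc$, then deleting this zero-weight edge already yields a perfect matching on $\Sc\backslash\{k\}$ of the same weight, so the LHS is $\le 0$; the RHS is $\ge 0$ since the choice $i=j$ in the max gives $\alpha_{ik}+\alpha_{ki}\ge 0$. If $i,j\ne k$ and $i\ne j$, I would delete both $k$-incident edges and insert the single ``short-circuit'' edge $(u_i,v_j)$, which is a valid matching on $\Sc\backslash\{k\}$ whose weight differs from $w(\Mc^*_\Sc)$ by $-\alpha_{kj}-\alpha_{ik}+\alpha'_{ij}$. This directly yields $w(\Mc^*_\Sc)-w(\Mc^*_{\Sc\backslash\{k\}})\le \alpha_{ik}+\alpha_{kj}-\alpha'_{ij}$, bounded by the RHS maximum. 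The remaining case $i=j\ne k$ requires a slightly different surgery: deletion now frees $u_i$ and $v_i$ simultaneously, and I would reclose by adding the diagonal self-edge $(u_i,v_i)$ of weight zero, incurring a net loss of $\alpha_{ik}+\alpha_{ki}=\alpha_{ik}+\alpha_{ki}-\alpha'_{ii}$, which again matches the RHS form, taking $i'=j'=i$.

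The main obstacle I anticipate is precisely this last case, where the naive cross-edge substitution is unavailable because $u_k$ and $v_k$ are matched to the \emph{same} index in $\Mc^*_\Sc$. The point that makes the argument go through is that the $\max$ on the right-hand side allows $i=j$, so the zero-weight self-edge $(u_i,v_i)$ is an admissible replacement and the two lost cross-terms $\alpha_{ik}+\alpha_{ki}$ resurface as $\alpha_{ik}+\alpha_{ki}-\alpha'_{ii}$ on the right. After these three cases are treated, the bound holds unconditionally.
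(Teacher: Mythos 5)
Your argument is correct and is essentially the paper's own proof: remove the two edges of $\Mc^*_\Sc$ incident to user $k$ and repair with the short-circuit edge $(i,j)$ (of weight $\alpha'_{ij}$, which is the zero-weight diagonal when $i=j$), then bound via suboptimality of the resulting matching on $\Sc\backslash\{k\}$. The only difference is that you spell out explicitly the perfect-matching WLOG and the cases $(k,k)\in\Mc^*_\Sc$ and $i=j$, which the paper dismisses with ``whether $i=j$ or not does not affect our proof.''
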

\begin{proof}
Suppose without loss of generality that the maximum weighted matching of $\Gc[\Sc]$ $(k \in \Sc)$ includes links $(i,k)$ and $(k,j)$ with weights $\alpha'_{ik}$ and $\alpha'_{kj}$ respectively and $i,j \ne k$. Note that whether $i=j$ or not does not affect our proof. After removing user $k$ and edges $(i,k)$, $(k,j)$ from the matching, and adding the link $(i,j)$ with weight $\alpha'_{ij}$, we have a matching for $\Sc \backslash \{k\}$. Thus, for all $\{(i,k),(k,j)\} \in \Mc^*_{\Sc}$, we have
\begin{align}
\MoveEqLeft w (\Mc^*_{\Sc}) - w (\Mc^*_{\Sc \backslash \{k\}}) \nn \\ &\le \max_{\{(i,k),(k,j)\} \in \Mc^*_{\Sc}, i,j \in \Sc, i,j \ne k}  \{ \alpha'_{ik} + \alpha'_{kj} - \alpha'_{ij} \} \\
&\le \max_{i,j \in \Sc, i,j \ne k}  \{ \alpha'_{ik} + \alpha'_{kj} - \alpha'_{ij} \}\\
&=\max_{i,j \in \Sc, i,j \ne k}  \{ \alpha_{ik} + \alpha_{kj} - \alpha'_{ij} \}.
\end{align} 
\end{proof}

Together with the condition (\ref{C1}), we have
\begin{align}
\alpha_{kk} \ge w (\Mc^*_\Sc) - w(\Mc^*_{\Sc \backslash \{k\}})
\end{align}
for any user $k \in \Sc$. Thus, by comparing the sum GDoF constraints without and with user $k$
\begin{align}
\sum_{j \in \Sc \backslash \{k\}} d_j &\le \sum_{j \in \Sc \backslash \{k\}} \alpha_{jj} - w (\Mc^*_{\Sc \backslash \{k\}})\\
\sum_{j \in \Sc} d_j &\le \sum_{j \in \Sc} \alpha_{jj} - w (\Mc^*_\Sc)\\
&=\sum_{j \in \Sc \backslash \{k\}} \alpha_{jj} + \alpha_{kk} - w (\Mc^*_\Sc),
\end{align}
it is readily verified that as long as (\ref{C1}) is satisfied, the sum GDoF constraint with user $k$ is not implied by the sum GDoF constraint without user $k$ and the individual GDoF constraint $d_k \le \alpha_{kk}$. In other words, with user $k$, the GDoF region is not decreasing. It follows immediately that $\Pc^{\rm TINA}_{\Sc \backslash \{k\}} \subseteq \Pc^{\rm TINA}_{\Sc}$ $(\forall k\in \Sc)$. More generally, if $\Sc_1 \subseteq \Sc_2$, then $\Pc^{\rm TINA}_{\Sc_1} \subseteq \Pc^{\rm TINA}_{\Sc_2}$. This completes the proof.

\subsection{Proof of Theorem~\ref{theorem:tin}}
\label{proof:tin}
Due to the fact that $\Rc^* \supseteq \Rc^{\rm TINA}$ and that, 
under condition (\ref{C1}), $\Rc^{\rm TINA} = \Pc^{\rm TINA}_{\Kc}$, achievability trivially follows.

For the converse, we follow the cyclic outer bounds first revealed in \cite[Theorem 2]{cyclicGIC} and later used to prove the 
optimality of TIN condition in \cite[Theorem 3]{TIN}. 

Thus, for the $K$-user Gaussian interference channel in the weak interference regime, the GDoF region under the condition \eqref{C1} is included in the set of GDoF tuples $(d_1,d_2,\dots,d_K)$ such that
\begin{align}
d_j &\le \alpha_{jj}, \quad \forall j \in \Kc\\ 
 \label{papa}
\sum_{j=0}^{m-1} d_{i_j} &\le \min \{f_\pi, g_{\pi,0}, \dots, g_{\pi,{m-1}}\},
\end{align} 
for any ordered subset $\pi = (i_0,i_1,\dots,i_{m-1}) \subset \Kc^m$, where we define
\begin{align}
 f_{\pi} &\defeq \sum_{j=0}^{m-1} \max\{0, \alpha_{i_j i_{j+1}}, \alpha_{i_j i_j} - \alpha_{i_{j-1} i_j}\} \\
g_{\pi,k}
&\defeq  \sum_{j=0}^{m-1} (\alpha_{i_j i_j} - \alpha_{i_{j-1} i_j}) + \alpha_{i_{k-1} i_k}, \;\; k = 0,\ldots, m-1, \label{outbound2}
\end{align}
and where the index subscript arithmetic is modulo $m$.

When $m=2$, then condition (\ref{C1}) is equivalent to the GNAJ condition and the bound is known to be tight. 
When $m > 2$, let us first consider the bound formed by the ``g'' terms in (\ref{papa}).
Notice that the left-hand side of (\ref{papa}) depends only on the indices in $\pi$ but not on its order.
Hence, letting $\Sc$ denote a given unordered subset of size $m$ of $\Kc$ and using the short-cut notation $\pi \in \pi(\Sc)$ to indicate
the ordered sets formed with the elements of $\Sc$, i.e., the permutations of $\Sc$, we can write
\begin{align}
\MoveEqLeft \min_{\pi \in \pi(\Sc)} \min_{k = 0,\ldots, m-1} \{ g_{\pi,k} \} \nn \\
&=  \min_{\pi \in \pi(\Sc)} \left \{ \sum_{j=0}^{m-1} (\alpha_{i_j i_j} - \alpha_{i_{j-1} i_j}) +  
\min_{k = 0,\ldots, m-1} \{\alpha_{i_{k-1} i_k}\} \right \} \nn \\
&= \sum_{j \in \Sc} \alpha_{jj} - \max_{\pi \in \pi(\Sc)} \left \{ \sum_{j=0}^{m-1}  \alpha_{i_{j-1} i_j} - \min_{k = 0,\ldots, m-1} \{\alpha_{i_{k-1} i_k}\} \right \}  \nn\\
&=  \sum_{j\in \Sc} \alpha_{j j} - w(\Mc^*_{\Sc}) \label{eq:zero}
 \end{align}
where \eqref{eq:zero} is due to the condition (\ref{C2}). 
If the maximum weighted matching involves multiple cycles, 
then the sum GDoF outer bound can be the combination of multiple sum GDoF constraints associated 
with the corresponding cyclic sequences. 
Thus, \eqref{eq:zero} still holds, because condition (\ref{C2}) 
holds for any subset of $\Sc \subseteq \Kc$. Due to the fact that
 \begin{align*}
 \sum_{j=0}^{m-1} (\alpha_{i_j i_j} - \alpha_{i_{j-1} i_j}) \le \sum_{j=0}^{m-1} \max\{0, \alpha_{i_j i_{j+1}}, \alpha_{i_j i_j} - \alpha_{i_{j-1} i_j}\} 
 \end{align*}
we have that
\begin{align}
\sum_{j \in \Sc} d_j &\le \min_{\pi \in \pi(\Sc)} \min \{f_\pi, g_{\pi,0}, \dots, g_{\pi,m-1}\}\\
&=  \sum_{j \in \Sc} \alpha_{j j} - w(\Mc^*_{\Sc})
\end{align} 
which coincides with $\Pc^{\rm TINA}_\Sc$ for every $\Sc \subseteq \Kc$. Under the condition (\ref{C1}), 
the TINA is the largest polyhedral region, so the converse bound is tight.
%


\end{document}